\newtheorem{theorem}{Theorem}
\theoremstyle{plain}
\numberwithin{equation}{section}
\begin{document}
\title[Weighted mean losses or gains]{Functional weak laws for the weighted
mean losses or gains and applications}
\author{Gane Samb Lo$^{*}$, Serigne Touba Sall$^{**}$, Pape Djiby Mergane $^{***}$}
\address{$^{*}$ Lerstad and Universit\'{e} Gaston Berger. BP 234,
Saint-Louis, S\'{e}n\'{e}gal \\
and LSTA, UMPC, Paris VI\\
, $^{**}$ LERSTAD and Ecole Normale Sup\'{e}rieure, Dakar, S\'{e}n\'{e}gal. }
\email{gslo@ufrsat.org, stsall@ufrsat.org\\
$^{***}$ Pape Djiby Mergane, Lerstad and Universit\'{e} Gaston Berger. BP 234,
Saint-Louis, S\'{e}n\'{e}gal.}
\subjclass[2000]{Primary 60F05, 60F15; Secondary 91C05}
\keywords{Empirical process, time dependent process, weak theory, risk measures, poverty index, loss function, economic welfare}

\begin{abstract}
In this paper, we show  that many risk measures arising in Actuarial Sciences, Finance, Medicine, Welfare analysis, etc. are garthered in
classes of Weighted Mean Loss or Gain (WMLG) statistics. Some of them are Upper Threshold Based (UTH) or Lower Threshold Based (LTH). These
statistics may be time-dependent when the scene is monitored in the time and depend on specific functions $w$ and $d$. This paper provides 
time-dependent and uniformly functional weak asymptotic laws that allow temporal and spatial studies of the risk as well as comparison between statistics in terms of dependence and mutual influence. The results are particularised for usual statistics like the Kakwani and Shorrocks ones that are mainly used in welfare analysis. Datadriven applications based on pseudo-panel data are provided.

\end{abstract}

\maketitle

\Large

\section{Introduction and motivation}

In many situations and many areas, we face the double problem of estimating
the risk of lying in some marked zone and, at the same time, the cost
associated with it. To fix ideas, we may be interessed in estimating the
immunocompromised patients number $Q$, the size of the set $\mathcal{M}$\ of
infected people, in some population $\mathcal{P}$. At the same time, we
know that the severity of the infection is measured by the viral load $Y$\
expressed in RNA copies per milliliter of blood plasma. The cost of
treatement, for example a course of chemotherapy, heavily depends on the viral
load. If one has to treat all the patients, there is a cost to pay for each
treatment, that is a cost function $d(Y).$ Facing these two problems at the
same time, comparing two different populations or monitoring the evolution
of the global situation should be based on the couple $(Q,d(Y))$ rather than
on which is commonly called the HIV/AIDS adult prevalence rate, on what is
based international comparison. In order to make a workable statistic,
consider a sample of individuals $\mathcal{E}=\{1,2,...,n\}$\ drawn for $%
\mathcal{P}$ and measure the viral load $Y_{j}$ for each $j\in \mathcal{M}.$ A
general comparative statistic should be of the form%
\begin{equation*}
\sum_{j\in \mathcal{M}}d(Y_{j}).
\end{equation*}%
Since comparisons over the time are based on this index, one would be
interested in putting more or less emphasis on the more infected or not, in
terms of viral load. This is achieved by affecting a weight $\rho (j)$ to $%
j\in \mathcal{M}$ as a monotone function of the rank $R_{j,n}$ of $Y_{j}$ in
the sample. For an increasing $\rho $, it is paid more attention to less
infected while the contrary holds for a decreasing one. This leads to
statistics like%
\begin{equation}
J_{n}(\mathcal{M},d,\rho )=\frac{1}{n}\sum_{j\in \mathcal{M}}\rho
(R_{j,n})d(Y_{j}).  \label{ind00}
\end{equation}%
It is also known that the viral load is detectable only above a threshold of value $%
Z_{0}=40$ RNA copies per milliliter of blood plasma. We thus have%
\begin{equation*}
j\in \mathcal{M}\iff Y_{j}\geq Z_{0}
\end{equation*}%
and%
\begin{equation*}
J_{n}(\mathcal{M},d,\rho )=\frac{1}{n}\sum_{Y_{j}\geq Z_{0}}\rho
(R_{j,n})d(Y_{j}).
\end{equation*}%

\noindent We may decide to concentrate on the very expansive chemotherapy courses due to financial pressure. In that case, we change the threshold to $Z>Z_{0}$ accordingly to the available budget.\\

\noindent Such statistics are also used in insurance theory. Suppose that one
insurance company receives $n$ claims $\{Y_{1},...,Y_{n}\}.$ We may fix a
threshold $Z$ such that any claim greater than $Z$ is seen as causing a loss 
$(Y_j-Z)$\ for the company. It then becomes interesting to estimate\ the
number of possible claims over $Z$, 
\begin{equation}
Q_{n}=\sum_{j=1}^{n}1_{(Y_{j}\geq Z)}  \label{qsup}
\end{equation}%
and to choose a distorsion function $\gamma $\ of the individual loss $%
(Z-Y_{j})$, hence (\ref{ind00}) is transformed here into%
\begin{equation*}
J_{n}(Z,d,\rho )=\frac{1}{n}\sum_{Y_{j}\geq Z}\rho (R_{j,n})\gamma (Y_{j})=%
\frac{1}{n}\sum_{j\geq n-Q}\rho (j/n)d(Y_{j,n}-Z),
\end{equation*}%
where $Y_{1,n}\leq ...\leq Y_{n,n}$ are the order statistics based on $%
\{Y_{1},...,Y_{n}\}.$ In this case, $J_{n}(\mathcal{M},d,\rho )$ may be seen
as a risk measure.\\

\bigskip

\noindent In poor countries, an individual is considred as a poor one when his income $Y$ below some threshold 
$Z$, called poverty line. And then 
\begin{equation}
Q_{n}=\sum_{j=1}^{n}1_{(Y_{j}\leq Z)}  \label{qinf}
\end{equation}%
is the total number of poor people in the sample, while $Q_{n}§n$ is the poor headcount. Usually the cost function here depends on the
relative poverty gap $\gamma (Y_{j})=(Z-Y_{j})/Z.$ In this field, following Lo \cite{loGPI2013}, $\ J_{n}(M,d,\rho )$ may be called a General Poverty Index
(GPI). The same form may also be used in medical science when
dealing with vitamine (say vitamine D) deficiency. In this case, $%
J_{n}(Z,d,\rho )$ is used as a general measure of vitamine deficiency to
evaluate the mean cost of vitamine supply as a treatment.\\

\noindent We see from the lines above that (\ref{ind00}) is a very general statistic,
that works in various fields, with losses or gains dependent on the meaning
of the cost function $c$. We are entitled to name it as a Weighted Mean Loss
or Gain ($WMLG$) statistic or random measure or index. It might take a
specific name, depending on the particular field where it operates. In the loss (resp. gain) case, we simply denote it $WML$ (resp. $WMG$).\\

\noindent When we have time-dependent data, over the time $[0,T]$ with continuous
observations $\left( \{Y_{1}(t),...,Y_{n}(t)\},t\in T\right) ,$ we are led
to a time-dependent $WMLG$ statistic in the form%
\begin{equation*}
J_{n}(\mathcal{M},\gamma ,\rho ,t)=\frac{1}{n}\sum_{j\in \mathcal{M}}\rho
(R_{j,n}(t))d(Y_{j}(t)).
\end{equation*}%
In the case where $\mathcal{M}$ is based on the threshold $Z$; the latter
should eventually depend on the time and becomes $Z=Z(t).$ Also, in an
spatial analysis, it would be possible to have a particular threshold for
any area.\\

\bigskip

\noindent The choice of $d$ and $\rho $ depends of the specific role played by (%
\ref{ind00}). But, a set of axioms, which are desirable or
mandatory to be fulfilled for a welfare or a risk measure, is usually adopted. For risk
measures, such axiomes alongside an axiomatic foundation are to be found in \textit{Artzner et al.} \cite{artzner}.  For poverty analysis, a large and deep review of the axiomatic approach, due to Sen \cite{sen}, is available in Zheng \cite{zheng}.\\

\bigskip

\noindent  Finally, on taking into account various forms of (\ref{ind00}) in the literature, the following form of threshold-based weighted mean loss seems to
be a general one%
\begin{equation}
J_{n}(Z,\omega ,d)=\frac{A(n)}{nB_{n}(Q_{n})}\sum_{j=1}^{Q_{n}}w(\mu
_{1}n+\mu _{2}Q_{n}-\mu _{3}j+\mu _{4})\;d\left((Z-Y_{j,n})/Z\right)  \label{wmlgs},
\end{equation}%
or the following%
\begin{equation}
J_{n}^{\ast }(Z,\omega ,d)=\frac{A(n)}{nB_{n}(Q_{n})}\sum_{j\geq
n-Q}^{n}w(\mu _{1}n+\mu _{2}Q_{n}-\mu _{3}j+\mu _{4})\;d\left((Z-Y_{j,n})/Z\right),
\label{wmlgi}
\end{equation}

\noindent depending on whether we handle loss (with $Q_{n}$\ defined in (\ref{qinf}))
or gains (with $Q_{n}$\ defined in (\ref{qsup})), and where%
\begin{equation*}
B(Q_{n})=\sum_{j=1}^{Q_{n}}w(j).
\end{equation*}%
From a mathematical \ point of view, the asymptotic behaviors of the two
forms radically differ although the writing seems symetrical. The reason
is that for the first, the random variables used in (\ref{wmlgs}) are bounded
and the asymptotic handling is much easier. As for (\ref{wmlgi}), we should
face heavy tail problems and further complications may arise.\\

\bigskip

\noindent This paper is aimed at offering a full functional weak theory according to
the most recent setting of such theories as stated in (\cite{vaart}).
Particularly, we are interested here in the time-dependent investigation of (%
\ref{wmlgs}), and next the functional weak theory in $d$ and $w$. We call
the first class of statistics Upper Threshold Based Weighted Mean Loss or
Gain (UTB WMLG) ones and the others are named Lower Threshold Based
Weighted Mean Loss indices (LTB WMLG). This paper is only concerned with the first
class of statistics. The others will be objects of further studies.

\bigskip

\noindent Consider for a while that $w$ and $d$ are fixed as well as the time. We
notice that asymptotic results of \ $J_{n}(Z,\omega ,d)$ are available for
specific forms in Welfare theory or in Actuarial Sciences. For example, Lo(\cite{loGPI2013})
proved that%
\begin{equation*}
J_{n}(Z,\omega ,d)\rightarrow J(Z,\omega
,d)=\int_{0}^{Z}w_{G}(Z,y)d((z-y)/z)dy
\end{equation*}

\begin{equation*}
=\mathbb{E}_{G}(w_{G}(Z,Y)d((Z-Y)/z)\mathbb{I}(Y\leq Z))=\mathbb{E}wmgl,
\end{equation*}%
where $\mathbb{E}wmgl$ may be called the Exact UTB WMLG.
For instance, the weight $w_{G}(Z,y)=2(1-y)$ is related to the Shorrocks 
\cite{shorrocks} and Thon \cite{thon} statistics , $w_{G}(Z,y)=2(1-y/Z)^{k}$
is the Kakwani weight (see \cite{kakwani}), that includes the Sen \cite{sen}
one corresponding to $k=1.$ For $w_{G}(Z,y)=1,$ we get the nonweighted mean
losses or gains.\\ 

\bigskip 

\noindent To be able to base statistical tests of such results, we may be interested in finding the asymptotic law of $$\sqrt{n}%
(J_{n}(\omega ,d)-J(\omega ,d))\, \textrm{ as } n\rightarrow \infty.$$


\noindent However, we still need to handle longitidunal data, where the risk situation
is analysed over a continuous period of time $[0,T]$. In this case, we are
faced with continuous data in the form of $\{Y(t),0<t<T\}$, and some
modification is needed in the definition of indices to take this into
account. We are then led to consider the time-dependent and UTB WMLG
statistic defined by

\begin{equation}
J_{n}(t)=\frac{A(Q_{n}\left( t\right) ,n,Z(t))}{nB(Q_{n}(t))}\sum_{j=1}^{Q_{n}%
\left( t\right) }w(\mu _{1}n+\mu _{2}Q_{n}\left( t\right) -\mu _{3}j+\mu
_{4})d\left( \frac{Z(t)-Y_{j,n}\left( t\right) }{Z(t)}\right) ,  \label{ssl01a}
\end{equation}%
with $0\leq t\leq T$\ and $T\in \mathbb{R}.$

\bigskip

\noindent Instead of analysing such UTB WMLG for some specific functions $w$ or $d,$
or at a fixed point $t$, it may be more valuable to have at once a uniform
weak theory on $w$, $d$ and $t\in T$. Such a result will provide individual
tests, and enables spatial and temporal comparisons of the risk measure. As well,
since all the measure are expressed in the same Gaussian field, we have
joint asymptotic distributions of the different indices themselves.\\

\noindent This paper is aimed at settling the uniform weak convergence of such
statistics, that is the asymptotic theory of the time-dependent poverty
measures (\ref{ssl01a}), in the space $C([0,T])$ of real continuous
functions defined on $[0,T]$. First attempts were treated for the special case
of time-dependent nonweighted mean loss or gain (MLG) measures in \cite{sall-lo}
and, in \cite{lodurban}, for nonrandomly $WLMG$ statistics, that is,  $WLMG$ statistics for which the
weight is nonrandom, like the Shorrocks one, is dealt with. Now, we target to give here
the most general results on the time-dependent UTB-WMLG statistics. Two potential applications areas here
are vitamine deficiency risk measures and poverty measures. It is then natural
to consider a threshold depending on the time. But we suppose that it lies
in some finite interval 
\begin{equation*}
0<Z_{1}\leq Z(t)\leq Z_{2}<+\infty .
\end{equation*}%
An important application is the statistical estimation of the Relative Mean
Loss Variation (RMLV) from time $t$ to $s$ defined as follow 
\begin{equation*}
\Delta RJ(t,s)=(J(s)-J(t))/J(t)
\end{equation*}%
by confidence intervals where $J_{n}(t)$ is a poverty
measure, one of the Millennium Development Goals (MDG) is halving of extreme
poverty from $t=2000$ to time $s=2015.$ This means that we target to have $%
\Delta RJ(t,s)\leq -50\%.$ Our results below tackle this issue.\\

\bigskip

\noindent We will need a number of hypotheses towards an adequate frame for our study.
These hypotheses may appear severe and numerous, at first sight, but most of
them are natural and easy to get. We first need the following shape conditions for the
$WMLG$ measures themselves. The letter $S$ in the hypotheses names refers to shape conditions.

\begin{itemize}
\item[(HS1)] There exist functions $h(p,q)$ of $(p,q)\in \mathbb{N}^{2},$ $%
c(u,v)$ and $\pi (u,v)$ of $(u,v)\in (0,1)^{2}$ independent of $t\in \lbrack
0,T],$ such that, as $n\rightarrow +\infty $, 
\begin{equation*}
\sup_{t\in \lbrack 0,T]}\max_{1\leq j\leq Q_{n}(t)}\left|
A(n,Q_{n}(t))h^{-1}(n,Q_{n}(t))w(\mu _{1}n+\mu _{2}Q_{n}(t)-\mu _{3}j+\mu
_{4})\right.
\end{equation*}
\begin{equation*}
\left. -c(Q_{n}(t)/n,j/n)\right| =o_{P}^{\ast }(n^{-1/2}).
\end{equation*}

\item[(HS2)]  
\begin{equation*}
\sup_{t\in \lbrack 0,T]}\max_{1\leq j\leq Q_{n}(t)}\left|
w(j)h^{-1}(n,Q_{n}(t))-\frac{1}{n}\pi (Q_{n}(t)/n,j/n)\right|
=o_{P}^{\ast }(n^{-3/2})
\end{equation*}

\item[(HS3)] There exists a function $c(u,v)$ of $(u,v)\in (0,1)^{2}$
independent of $t\in \lbrack 0,T],$ such that, as $n\rightarrow +\infty $, 
\begin{equation*}
\sup_{t\in \lbrack 0,T]}\max_{1\leq j\leq Q_{n}(t)}\left|
A(n,Q_{n}(t))B^{-1}(n,Q_{n}(t))w(\mu _{1}n+\mu _{2}Q_{n}(t)-\mu _{3}j+\mu
_{4})\right.
\end{equation*}
\begin{equation*}
\left. -c(Q_{n}(t)/n,j/n)\right| =o_{P}^{\ast }(n^{-1/2}).
\end{equation*}
\end{itemize}

\bigskip

\noindent We will require other assumptions depending on the regularity of the functions $c$ and $%
\pi$. The letter $R$ in these hypotheses name refers to Regularity conditions..

\begin{itemize}
\item[(HR1)] The bivariate functions $c$ and $\pi $ have equi-continuous
partial differential on $(\beta ,\xi )\times (0,1)$, where $\beta $ and $\xi 
$ are two real numbers to be defined later on.

\item[(HR2)] For a fixed $x$, the functions $y\rightarrow \frac{\partial c}{%
\partial y}(x,y)$ and $y\rightarrow \frac{\partial \pi }{\partial y}(x,y)$
are monotone.

\item[(HR3)] There exist $H_{0}>0$ and $H_{\infty }<+\infty $ such that, for 
$t\in \lbrack 0,T],$\newline
\begin{equation*}
H_{0}<H_{c}(t)=\int c(G_{t}(Z(t)),G_{t}(y))\gamma _{t}(y)dG_{t}(y)<H_{\infty
}
\end{equation*}
and 
\begin{equation*}
H_{0}<H_{\pi }(t)=\int \pi (G_{t}(Z(t)),G_{t}(y))e_{t}(y)dG_{t}(y)<H_{\infty
}.
\end{equation*}
\end{itemize}

\bigskip

\noindent Our final achievement is that, when putting $J(t)=H_{c}(t)/H_{\pi }(t),$ we
are able to get the uniform asymptotic law of $\{\sqrt{n}(J_{n}(t)-J(t)),0%
\leq t\leq T\}$ and to describe the limiting Gaussian process $\{\mathbb{G}%
(t),0\leq t\leq T\}.$ This enables the statistical uniform estimates of \ $%
\Delta _{n}J_{n}(t,s)=J_{n}(t)-J_{n}(s)$ by $\Delta J(t,s)=J(t)-J(s)$ by
interval confidences. We also particularize the results for the so-important
Kakwani class of WMLG statistics of which the Sen one is a member. The results
that have directly been derived for the Shorrocks case are rediscovered here.

\section{Our results}

\label{sec1}

Our results will rely on the representation of Theorem \cite{lotools2},
which in turn will need the following assumptions.

\bigskip

\begin{itemize}
\item[(HL1)] There exist $\beta >0$ and $0<\xi <1$ such that 
\begin{equation*}
0<\beta <\inf_{0\leq t\leq T}G_{t}(Z_{1})<\sup_{0\leq t\leq
T}G_{t}(Z_{2})<\xi <1.
\end{equation*}

\bigskip

\item[(HL2)] The subclass $\mathcal{F}_{0}=\{\pi _{t,Z}:x\leadsto
1_{(x(t)\leq Z)},t\in \lbrack 0,T]\}$ of $\ell ^{\infty }(C([0,T]))$, the
set of real bounded and continuous functions, is a $\mathbb{P}_{Y}-$%
Glivenco-Cantelli class, that is, as $n\rightarrow \infty ,$%
\begin{equation*}
\sup_{t\in \lbrack 0,T]}\left| G_{t,n}(Z(t))-G_{t}(Z(t))\right| \rightarrow
0,\text{ }a.s.o.p.
\end{equation*}
where, for any $t\in \lbrack 0,T]$ and $y\in \mathbb{R}$, $%
G_{t,n}(y)=n^{-1}\sum_{i=1}^{n}1_{(Y(t)\leq y)}$. As a reminder $\mathbb{R} \ni z_{n} \rightarrow 0$, $a.s.o.p$ \text{ as } $n\rightarrow +\infty$ means 
($z_{n} \rightarrow 0$ in outer probability), that is : there exists a sequence of measurable random variables, $u_n$ such that for any $n\geq 1$, $ |z_n| \leq u_n$ and $u_n \rightarrow 0 \text{ as } n\rightarrow +\infty$.\\ 

\noindent Finally let us denote $f_{t}(x)=x(t),$ where $x\in \ell ^{\infty }(C([0,T])).$

\bigskip

\item[(HL3)] For any $t\,\in\,[0,T],$ $G_{t}$ is strictly increasing and the
functions $G_{t}$ are uniformly continuous in $t\in \lbrack 0,T \rbrack $.

\bigskip

\item[(HL4)] $d$ is bounded by one and is differentiable with derivative
function $d^{\prime }$\ bounded by M : $0\leq d\leq 1,\!\left| d^{\prime
}\right| \leq M.$
\end{itemize}

\bigskip

\begin{theorem}
\label{theo1} Suppose that (HS1)-(HS2), (HR1)-(HR3) and (HL1)-(HL4) hold.
Put $J(t)=H_{c}(t)/H_{\pi }(t),$%
\begin{equation}
K_{c}(t)=\int_{0}^{1}\frac{\partial c}{\partial x}(G_{t}(Z(t)),s)\gamma_t
(G_{t}^{-1}(s))ds,\text{ }K_{\pi }(t)=\int_{0}^{1}\frac{\partial \pi }{%
\partial x}(G_{t}(Z(t)),s)e_t(G_{t}^{-1}(s))ds,  \label{meth4}
\end{equation}%
\begin{equation}
K(t)=H_{\pi }^{-1}(t)K_{c}(t)-H_{c}(t)H_{\pi }^{-2}(t)K_{\pi }(t)
\label{meth3}
\end{equation}%
\begin{equation}
g_{c,t}(\cdot )=c(G_{t}(Z(t)),G_{t}(f_{t}(\cdot )))\gamma_t (f_{t}(\cdot )),%
\text{ }g_{\pi ,t}=\pi (G_{t}(Z(t)),G_{t}(f_{t}(\cdot )))e_t(f_{t}(\cdot
))+K(t)e_t(f_{t}(\cdot))  \label{meth2}
\end{equation}%
\noindent\ and 
\begin{equation*}
\nu _{c,t}(y)=\frac{\partial c}{\partial y}(G_{t}(Z(t)),G_{t}(f_{t}(y)))%
\gamma_t (f_{t}(y)),\nu _{\pi ,t}(y)=\frac{\partial \pi }{\partial y}%
(G_{t}(Z(t)),G_{t}(f_{t}(y)))e_t(f_{t}(y)).
\end{equation*}%
Define 
\begin{equation*}
g_{t}=H_{\pi }^{-1}(t)g_{c,t}-H_{c,t}(t)H_{\pi ,t}^{-2}g_{\pi ,t}
\end{equation*}%
and 
\begin{equation*}
\nu _{t}=H_{\pi }^{-1}(t)\nu _{c,t}-H_{c}(t)H_{\pi }^{-2}(t)\nu _{\pi ,t}.
\end{equation*}

\noindent Then we have, uniformly in $t\in \lbrack 0,T]$, the following
representation, as $n\rightarrow \infty ,$%
\begin{equation}
\sqrt{n}(J_{n}(t)-J(t))=\alpha _{t,n}(g_{t})+\beta _{n}(\nu
_{t},t)+o_{P}^{\ast }(1),  \label{repA}
\end{equation}
with 
\begin{equation*}
\alpha _{t,n}(g_{t})=\frac{1}{\sqrt{n}}\sum_{j=1}^{n}\left\{ g_{t}(Y_{j}(t))-%
\mathbb{E}g_{t}(Y_{j}(t))\right\}
\end{equation*}
and 
\begin{equation}
{\small \beta _{n}(\nu _{t},t)=\frac{1}{\sqrt{n}}\sum \left\{
G_{t,n}(Y_{j}(t))-G_{t}(Y_{j}(t))\right\} \nu _{t}(Y_{j}(t)).}
\label{proc01}
\end{equation}
Suppose that (HS3), (HR1)-(HR3) and (HL1)-(HL4) hold. Then (\ref{repA})
holds with 
\begin{equation}
K(t)=K_{c}(t),g_{t}=g_{c,t}\text{ and }v_{t}=\nu _{c,t}
\end{equation}
\end{theorem}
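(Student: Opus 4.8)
The plan is to write $J_n(t)$ as a ratio of two smooth functionals of the empirical distribution function $G_{t,n}$, linearise that ratio by a delta-method step, and then expand each functional around $G_t$ so as to read off the two processes $\alpha_{t,n}(g_t)$ and $\beta_n(\nu_t,t)$. First I would reduce to a ratio of empirical integrals. Writing $Aw/(nB)=\bigl(Ah^{-1}w\bigr)\big/\bigl(n\,h^{-1}B\bigr)$, I use (HS1) to replace $Ah^{-1}w(\mu_1 n+\mu_2 Q_n(t)-\mu_3 j+\mu_4)$ by $c(Q_n(t)/n,j/n)$ in the numerator and (HS2) to replace $h^{-1}w(j)$ by $n^{-1}\pi(Q_n(t)/n,j/n)$ in $h^{-1}B=\sum_{j=1}^{Q_n(t)}h^{-1}w(j)$, obtaining
\[
J_n(t)=\frac{H_{c,n}(t)}{H_{\pi,n}(t)}+o_{P}^{\ast}(n^{-1/2}),\qquad H_{c,n}(t)=\frac1n\sum_{j=1}^{Q_n(t)}c\!\left(\frac{Q_n(t)}{n},\frac{j}{n}\right)\gamma_t(Y_{j,n}(t)),
\]
and similarly $H_{\pi,n}(t)$ with $\pi$ and the weight $e_t$. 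The post-summation error is genuinely $o_{P}^{\ast}(n^{-1/2})$ because $0\le d\le1$ (HL4) controls each of the at most $Q_n(t)\le n$ summands, while the $n^{-3/2}$ rate of (HS2) leaves $o_{P}^{\ast}(n^{-1/2})$ once the at most $n$ terms of $h^{-1}B$ are added. Since $j/n=G_{t,n}(Y_{j,n}(t))$ and $Q_n(t)/n=G_{t,n}(Z(t))$, both $H_{c,n}$ and $H_{\pi,n}$ are integrals of the smooth kernels $c,\pi$, evaluated at $G_{t,n}$, against $dG_{t,n}$.

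Next I would apply the delta method to the ratio. By (HR3) the denominator limit $H_\pi(t)$ is bounded away from $0$ and $\infty$ uniformly in $t$, and by the Glivenco-Cantelli property (HL2) together with (HL3) one has $H_{\pi,n}(t)\to H_\pi(t)$ uniformly, so a first-order expansion gives, uniformly in $t$,
\[
\sqrt n\,(J_n(t)-J(t))=H_\pi^{-1}(t)\,\sqrt n\,(H_{c,n}(t)-H_c(t))-H_c(t)H_\pi^{-2}(t)\,\sqrt n\,(H_{\pi,n}(t)-H_\pi(t))+o_{P}^{\ast}(1),
\]
which is the source of the coefficients $H_\pi^{-1}$ and $-H_cH_\pi^{-2}$ appearing in $g_t$, $\nu_t$ and $K(t)$. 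For each functional I then split, for the numerator,
\[
H_{c,n}(t)-H_c(t)=\int c(G_t(Z(t)),G_t(y))\gamma_t(y)\,d(G_{t,n}-G_t)(y)+\int\bigl[c(G_{t,n}(Z(t)),G_{t,n}(y))-c(G_t(Z(t)),G_t(y))\bigr]\gamma_t(y)\,dG_{t,n}(y).
\]
The first integral, recentred, is $n^{-1/2}\alpha_{t,n}(g_{c,t})$. In the second I Taylor-expand $c$ in both arguments: the $\partial c/\partial x$ part, evaluated at the random threshold value $G_{t,n}(Z(t))$ and integrated, yields after the change of variable $s=G_t(y)$ the correction $K_c(t)$ of (\ref{meth4}); the $\partial c/\partial y$ part, paired with $(G_{t,n}-G_t)(y)$, produces the process $\beta_n(\nu_{c,t},t)$ with $\nu_{c,t}$ as defined. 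Doing the same for $\pi$ and $e_t$ and assembling through the delta-method step reconstitutes exactly $g_t=H_\pi^{-1}g_{c,t}-H_cH_\pi^{-2}g_{\pi,t}$, the constant $K(t)$ being absorbed into $g_{\pi,t}$ as in (\ref{meth2}), and $\nu_t=H_\pi^{-1}\nu_{c,t}-H_cH_\pi^{-2}\nu_{\pi,t}$; invoking the representation of Theorem~\cite{lotools2} packages these steps and upgrades every $o_{P}^{\ast}$ to hold uniformly in $t\in[0,T]$.

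The main obstacle is the control of the second (argument-perturbation) integral above: one must show that replacing $G_{t,n}$ by $G_t$ inside $c$ and $\pi$ produces exactly the linear $\beta_n$ term plus a negligible remainder, uniformly in $t$. This rests on a second-order Taylor estimate whose remainder is handled by the equicontinuity of the partials (HR1) and the monotonicity (HR2), together with a uniform (in $t$) modulus-of-continuity bound for the process $\{G_{t,n}-G_t\}$ supplied by (HL2)--(HL3); the product of two $O_{P}^{\ast}(n^{-1/2})$ factors there is what becomes $o_{P}^{\ast}(1)$ after the $\sqrt n$ scaling. Hypotheses (HL1) and (HL4) enter by keeping $G_t(Z(t))\in[\beta,\xi]$ bounded away from $0$ and $1$ and $d,d'$ bounded, so the kernels and their derivatives stay uniformly bounded on the effective domain. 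Finally, the second assertion is the degenerate case in which (HS3) replaces (HS1)--(HS2): the normalisation $AB^{-1}w\to c$ already produces the limit directly, so the ratio structure disappears ($H_\pi\equiv1$, $J(t)=H_c(t)$), there is no separate denominator to linearise, and the decomposition collapses to $K(t)=K_c(t)$, $g_t=g_{c,t}$ and $\nu_t=\nu_{c,t}$.
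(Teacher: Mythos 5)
Your proposal is essentially correct, but note that the paper never proves Theorem \ref{theo1} in the text at all: the authors only say that their results ``rely on the representation of Theorem \cite{lotools2}'' and import the conclusion from that reference. What you have done is reconstruct the argument that citation encapsulates, and your reconstruction fits the way the hypotheses were manifestly designed: (HS1)--(HS2) are exactly the substitutions that replace the normalized weights by $c(Q_n(t)/n,j/n)$ and $n^{-1}\pi(Q_n(t)/n,j/n)$ with post-summation errors $o_P^{\ast}(n^{-1/2})$ (using $Q_n(t)\leq n$ and $0\leq d\leq 1$ from (HL4)); the identities $j/n=G_{t,n}(Y_{j,n}(t))$ and $Q_n(t)/n=G_{t,n}(Z(t))$ turn both sums into functionals of $G_{t,n}$; the ratio linearization is the source of the coefficients $H_{\pi}^{-1}(t)$ and $-H_{c}(t)H_{\pi}^{-2}(t)$ in $g_t$, $\nu_t$ and $K(t)$; and your two-argument Taylor expansion correctly sorts the fluctuation into the $\partial/\partial x$ part (producing $K_c(t)$, $K_{\pi}(t)$ times the empirical process of the indicator), the $\partial/\partial y$ part paired with $G_{t,n}-G_t$ (producing $\beta_n(\nu_t,t)$), and the fixed-kernel part (producing $\alpha_{t,n}(g_{c,t})$, $\alpha_{t,n}(g_{\pi,t})$). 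Your reading of the (HS3) alternative as the case with no denominator to linearise, so that $g_t=g_{c,t}$, $\nu_t=\nu_{c,t}$, $K(t)=K_c(t)$, is also right. The one thin spot --- and it is precisely the spot the paper hides behind the citation --- is the uniformity in $t$ of the remainders: (HL2) is only a Glivenko--Cantelli hypothesis, so it yields $\sup_{t}\Vert G_{t,n}-G_t\Vert_{\infty}\rightarrow 0$ with no rate, whereas your quadratic Taylor remainder needs a bound of the type $\sup_{t}\sqrt{n}\,o^{\ast}(\Vert G_{t,n}-G_t\Vert_{\infty})=o_P^{\ast}(1)$, i.e.\ a Donsker/DKW-type control that holds uniformly over $t\in[0,T]$; for each fixed $t$ this is the DKW inequality, but the supremum over $t$ is exactly the nontrivial content of the representation theorem in \cite{lotools2}. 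Since you, like the authors, ultimately delegate that step to \cite{lotools2}, your argument is at least as complete as the paper's own treatment, and considerably more explicit.
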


\bigskip

\noindent This theorem expresses our studied time-dependent statistics as the sum of a
functional empirical process and the stochastic process (\ref{proc01})$.$ It
will be seen, for a fixed time, that $\beta _{n}(\nu _{t},t)$ is
asymptotically an integral of the quantile process $\sqrt{n}(s-V_{t,n}(s))$
based on $G_{t}(Y_{1}(t)),...,G_{t}(Y_{n}(t))$ (where $V_{t,n}(s)$ is the
empirical quantile function) and then of empirical process $\alpha _{t,n}(s)=%
\sqrt{n}\sum (I_{(G_{t}(Y_{j}(t)\leq s)}-s).$ These facts make easy the
handling of $\sqrt{n}(J_{n}(t)-J(t))$ in the modern empirical process
setting as stated in \cite{vaart}. We still need a thorough study of (\ref%
{proc01}) and its connection with $\alpha _{t,n}$ while the computation of
the variance and covariance function. This is done separately in \cite{logs} to avoid lengthy papers.\\

\noindent Now, we use these tools to give first, general laws for the WMLG
statistic below and then for the Kakwani class of indices in Section \ref%
{sec2} and for the Shorrocks-Thon indices in Section \ref{sec3}. We finish
by a special study of the absolute and the relative poverty changes in
Section \ref{sec4}.\\

\bigskip

\noindent While we deal with the general index and we use the outcomes of Theorem \ref%
{theo1}, we adopt the following writing : 
\begin{equation*}
\alpha _{t,n}(g_{t})=\frac{1}{\sqrt{n}}\sum_{j=1}^{n}W_{j}(t)-\mathbb{E}%
W_{j}(t)
\end{equation*}
where $W_{j}(t)=g_{t}(Y_j(t))$. Then we are entitled to express the hypotheses 
$(HT1)$ and $(HT2)$ below on the $W_{j}(t)$ in place of the $Y_{j}(t)$ for
the general case. And we suppose that $W_{j}(t)$ admits a density
probability $m_{t}$ for each $t\in \lbrack 0,T]$. In particular cases, we
will turn back to hypotheses on the $Y_{j}(t)$ for establishing (HT2) and
(HT3) and subsequently recover the results. \textbf{In the sequel, $r$ is a fixed
positive real number such that $0<r<1/2$.} And from now, the limits and the $o_{P}(1)$
are performed when $n\rightarrow \infty.$

\begin{itemize}
\item[(HT1)] For 0$\leq s,t\leq T,$ for some constant $K$, $\mathbb{E}\left|
W(t)-W(s)\right| ^{2}\leq K\left| t-s\right| ^{1+r}.$

\item[(HT2)] For 0$\leq s,t\leq T,$ for some constant $K$, $\left| \mathbb{E}%
(W(t))-\mathbb{E}(W(s))\right| ^{2}\leq K\left| t-s\right| ^{1+r}.$
\end{itemize}

In order to define our last assumption, we need the following functions : 
\newline
\begin{equation*}
g(\nu _{t},\nu _{s},t,s)=\int \left( \int_{x\geq u}\nu
_{t}(x)\,dG_{t}(x)\right) \left( \int_{y\geq v}\nu _{s}(y)\,dG_{s}(y)\right)
dG_{t,s}(u,v)
\end{equation*}
and 
\begin{equation*}
c(\nu _{t},t)=\int \left( \int_{x\geq u}\nu _{t}(x)\,dG_{t}(x)\right)
^{2}dG_{t}(u)
\end{equation*}
with, by convention, $\mathbb{E}_{t}h=\int h(u)dG_{t}(u)$ for a function $h.$
Set

\begin{itemize}
\item[(HT3)] If there is a universal constant $K_{5}$, such that for any $%
\delta >0,$ for large enough values of $n,$%
\begin{equation*}
\left| s-t\right| \leq \delta \Longrightarrow \left| 2(c(\nu _{t},t)-g(\nu
_{t},\nu _{s},t,s))+\left\{ (\mathbb{E}_{t}G_{t}\nu _{t})(\mathbb{E}%
_{s}G_{s}\nu _{s})-(\mathbb{E}_{t}G_{t}\nu _{t})^{2}\right\} \right|
\end{equation*}

\begin{equation}
\leq \frac{3}{2}K_{3}\left| s-t\right| ^{1+r}.  \label{unif04}
\end{equation}
\end{itemize}

\bigskip

We are now able to give our general main result.

\begin{theorem}
\label{theo2} Assume the conditions of Theorem \ref{theo1} hold and that
(HT1)-(HT3) are satisfied. Then the stochastic process $\{\sqrt{n}%
(J_{n}(t)-J(t)),0\leq t\leq T\}$ converges in $\ell ^{\infty }([0,T])$ to a
centered Gaussian process $\{\mathbb{G}(t),0\leq t\leq T\}$ with covariance
function 
\begin{equation*}
\Gamma (t,s)=\Gamma _{1}(g_{t},g_{s},t,s)+\Gamma _{2}(\nu _{t},\nu
_{s},t,s)+\Gamma _{2}(t,s)+\Gamma _{3}(g_{t},\nu _{t},t,s),
\end{equation*}%
\noindent with 
\begin{equation*}
\Gamma _{1}(g_{t},g_{s},t,s)=\int (g_{t}(x)-\eta (t))(g_{s}(y)-\eta
(s))dG_{t,s}(x,y),
\end{equation*}%
\begin{equation*}
\Gamma _{2}(\nu _{t},\nu _{s},t,s)=g(\nu _{t},\nu _{s},t,s)-(\mathbb{E}%
_{t}G_{t}\nu _{t})(\mathbb{E}_{s}G_{s}\nu _{s}),
\end{equation*}%
\begin{equation*}
g(\nu _{t},\nu _{s},t,s)=\int \left( \int_{x\geq u}\nu
_{t}(x)dG_{t}(x)\right) \left( \int_{x\geq v}\nu _{s}(x)dG_{s}(x)\right)
dG_{t,s}(u,v)
\end{equation*}%
and 
\begin{equation*}
\Gamma _{3}(g_{t},\nu _{s},t,s)=\kappa (g_{t},\nu _{s},t,s)+\kappa
(g_{s},\nu _{t},s,t)
\end{equation*}%
with 
\begin{equation*}
\kappa (g_{t},\nu _{s},t,s)=\int g_{t}(u)\left( \int_{x\geq v}\nu
_{s}(x)dG_{s}(x)\right) dG_{t,s}(u,v),
\end{equation*}%
and $g_{t}$ and $\nu _{t}$ are given in Theorem \ref{theo1}, and 
\begin{equation*}
\eta (t)=\int g_{t}(y)\:dG_{t}(y).
\end{equation*}
\end{theorem}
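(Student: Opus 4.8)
The plan is to use the representation of \thmref{theo1} to collapse the entire object into a single linear functional empirical process indexed by $t$, and then to run the standard two-step program of the modern theory in \cite{vaart}: convergence of the finite-dimensional distributions together with asymptotic tightness (equicontinuity). By (\ref{repA}) the remainder is $o_P^{\ast}(1)$ uniformly in $t$, so it suffices to analyse $X_n(t) = \alpha_{t,n}(g_t) + \beta_n(\nu_t,t)$. The argument applies verbatim to either set of $(g_t,\nu_t)$ furnished by \thmref{theo1}.

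First I would linearise the quantile term $\beta_n(\nu_t,t)$. Writing $G_{t,n}(y) - G_t(y) = (\mathbb{P}_n^{(t)} - P^{(t)})(1_{(\cdot \le y)})$ and integrating against the empirical measure gives $\beta_n(\nu_t,t) = \sqrt n \int (G_{t,n} - G_t)\,\nu_t\, dG_{t,n}$. Replacing $dG_{t,n}$ by $dG_t$ costs a product of two empirical-process factors, hence $o_P^{\ast}(1)$ after the $\sqrt n$ rescaling, uniformly in $t$ by the Glivenko--Cantelli property (HL2), the boundedness (HL4) and the equicontinuity (HR1)--(HR3); expanding the empirical process then produces the H\'{a}jek projection
\[
\beta_n(\nu_t,t) = \frac{1}{\sqrt n}\sum_{j=1}^n\Bigl\{\phi_t(Y_j(t)) - \mathbb{E}_t\phi_t\Bigr\} + o_P^{\ast}(1),\qquad \phi_t(u) = \int_{y \ge u}\nu_t(y)\,dG_t(y),
\]
with $\mathbb{E}_t\phi_t = \mathbb{E}_t(G_t\nu_t)$ by Fubini's theorem. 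This is exactly the ``integral of the quantile process'' announced after \thmref{theo1}, and the degenerate second-order part of the associated $V$-statistic is $O_P(n^{-1/2})$ and drops out. Setting $\Psi_t = g_t + \phi_t$ we obtain the clean representation $X_n(t) = n^{-1/2}\sum_{j=1}^n\{\Psi_t(Y_j(t)) - \mathbb{E}_t\Psi_t\} + o_P^{\ast}(1)$, a genuine functional empirical process in the single index $t$.

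Finite-dimensional convergence is then immediate: for fixed $t_1,\dots,t_k$ the summands $(\Psi_{t_1}(Y_j),\dots,\Psi_{t_k}(Y_j))$ are i.i.d. with finite second moments (bounded via (HL4) and (HR3)), so the multivariate Lindeberg--L\'{e}vy CLT yields a centered Gaussian limit with covariance $\Gamma(t,s) = \mathrm{Cov}(\Psi_t(Y(t)),\Psi_s(Y(s)))$. Expanding this by bilinearity of the covariance into the $g$--$g$, $\phi$--$\phi$ and mixed $g$--$\phi$ contributions reproduces, respectively, $\Gamma_1$, $\Gamma_2$ (note $g(\nu_t,\nu_s,t,s) = \mathbb{E}[\phi_t(Y(t))\phi_s(Y(s))]$, so $\Gamma_2 = \mathrm{Cov}(\phi_t,\phi_s)$) and $\Gamma_3$ assembled from the two $\kappa$-terms, which is precisely the stated covariance.

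The main obstacle is asymptotic tightness, and this is where (HT1)--(HT3) enter. Since the summands are i.i.d., the increment has variance
\[
\mathbb{E}\bigl|X_n(t) - X_n(s)\bigr|^2 = \mathrm{Var}\bigl((\Psi_t-\Psi_s)(Y)\bigr) \le 2\,\mathbb{E}\bigl|(g_t-g_s)(Y)\bigr|^2 + 2\,\mathrm{Var}\bigl(\phi_t(Y(t))-\phi_s(Y(s))\bigr).
\]
The first term is controlled by (HT1) together with (HT2) on the centering, each of order $|t-s|^{1+r}$; the second is exactly the quantity estimated in (HT3), since $\mathrm{Var}(\phi_t-\phi_s)$ unfolds into $2(c(\nu_t,t) - g(\nu_t,\nu_s,t,s))$ plus the displayed mean corrections. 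Hence $\mathbb{E}|X_n(t) - X_n(s)|^2 \le K|t-s|^{1+r}$ with $r>0$; because the exponent $1+r$ strictly exceeds $1$, the Billingsley/Kolmogorov moment criterion delivers asymptotic equicontinuity in $|t-s|$, so the sequence is asymptotically tight in $\ell^\infty([0,T])$ and the limit has continuous paths. Combining tightness with the finite-dimensional convergence gives weak convergence of $\{\sqrt n(J_n(t)-J(t))\}$ to the Gaussian process $\mathbb{G}$ with covariance $\Gamma$. I expect the delicate points to be the \emph{uniform} (in $t$) negligibility of the linearisation remainder for $\beta_n$, which leans on (HR1)--(HR3) and (HL4), and the verification that (HT3) genuinely bounds $\mathrm{Var}(\phi_t-\phi_s)$ at the rate $|t-s|^{1+r}$ rather than merely at rate $|t-s|$.
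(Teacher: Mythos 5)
Your proposal is correct in its overall architecture and reaches the right limit, but it runs the core of the argument along a genuinely different route from the paper. The paper never linearises $\beta_n(\nu_t,t)$ into an i.i.d.\ sum: for tightness it bounds $\mathbb{E}(\beta_n(\nu_t,t)-\beta_n(\nu_s,s))^2$ directly through an exact second-moment identity quoted from \cite{logs} and applies Lemma 1 of \cite{sall-lo} to $N_n(t)=\alpha_{t,n}(g_t)+\beta_n(\nu_t,t)$; for the finite-dimensional laws it represents $\beta_n$ as an integral of the quantile process, uses the duality $\alpha_{i,n}=-\varepsilon_{i,n}+o_P(1)$ from \cite{shwell}, embeds everything in a functional empirical process indexed by a Vapnik--Cervonenkis class of indicators, and passes to an almost-sure construction via the Skorohod--Wichura--Dudley theorem, leaving the covariance computation to \cite{logs}. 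You instead perform the H\'ajek--Hoeffding projection of $\beta_n$ onto $n^{-1/2}\sum_j\{\phi_t(Y_j(t))-\mathbb{E}_t\phi_t\}$ and finish with the Lindeberg--L\'evy CLT applied to the single empirical process $n^{-1/2}\sum_j\{\Psi_t(Y_j(t))-\mathbb{E}\Psi_t\}$ with $\Psi_t=g_t+\phi_t$. What your route buys is economy: the finite-dimensional step becomes classical, and the covariance drops out by bilinearity rather than being outsourced (indeed your computation produces the \emph{centred} cross terms $\kappa(g_t,\nu_s,t,s)-\eta(t)\mathbb{E}_s(G_s\nu_s)$, which is the internally consistent version of the paper's $\Gamma_3$; the paper's displayed $\Gamma$, with its undefined extra $\Gamma_2(t,s)$ summand, is garbled on this point). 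What it costs is that the linearisation error becomes load-bearing: your increment bound is computed for the projected process, so transferring tightness to $\sqrt{n}(J_n(t)-J(t))$ requires the degenerate V-statistic remainder to be $o_P^*(1)$ \emph{uniformly} in $t$, not merely pointwise. Be aware that your first justification for this (a ``product of two empirical-process factors'') only yields $O_P(1)$ as stated, since $\sqrt{n}\|G_{t,n}-G_t\|_\infty=O_P(1)$ and the signed measure $d(G_{t,n}-G_t)$ has total variation of order one; the correct pointwise rate $O_P(n^{-1/2})$ comes from the variance of the degenerate part of the Hoeffding decomposition, which you do invoke, and uniformity in $t$ then needs a maximal inequality or chaining argument on top, using the boundedness of $\nu_t$ furnished by (HR1)--(HR3) and (HL4). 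You flag this delicate point yourself and it is provable, but note that in the paper's scheme it never arises, because tightness is established on $N_n(t)$ directly from (HT1)--(HT3) without any uniform projection step.
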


\begin{proof}

We have to do three things. First, we show that $\sqrt{n}(J_{n}(t)-J(t))$ is
asymptotically tight. Next, we have to prove that it converges in finite distributions. And
finally, we should compute the covariance function. We will only sketch the first and
the second tasks with the appropriate citations. The second will be properly
adressed.\\

\noindent Since the assumptions of Theorem \ref{theo1} hold, we have the
representation (\ref{repA}). Put 
\begin{equation}
N_{n}(t)=\alpha _{t,n}(g_{t})+\beta _{n}(\nu _{t},t).  \label{f01}
\end{equation}
First $(HT1)$ and $(HT2)$ yield, for each $j\in \lbrack 1,n],$ for some
constant $K,$ 
\begin{equation*}
\mathbb{E}\left| W_{j}(t)-W_{j}(s)\right| ^{2}+\left| \mathbb{E}W_{j}(t)-%
\mathbb{E}W_{j}(s)\right| ^{2}\leq K\left| s-t\right| ^{1+r},
\end{equation*}
and hence, by repeated use of $c_{2}$-inequality (that is, for any couple $%
(a,b)$ of scalars $\left| a+b\right| ^{2}\leq 2(\left| a\right| ^{2}+\left|
b\right| ^{2})$, for some constant $K,$ 
\begin{equation}
\left| \alpha _{t,n}(g_{t})-\alpha _{s,n}(g_{s})\right|^2 \leq K\left|
s-t\right| ^{1+r}.  \label{f02}
\end{equation}

\noindent We remind again that $r$ is strictly less that $1/2$, otherwise functions satisfying \ref{f02} are constant. Here and in the sequel, $K$ is a generic constant eventually taking different values
from one formula to another.\ Next, we find in \cite{logs}, that $%
\mathbb{E}(\beta _{n}(\nu _{t},t)-\beta _{n}(\nu _{s},s))^{2}$ is 
\begin{equation}
2(c(\nu _{t},t)-g(\nu _{t},\nu _{s},t,s))+\left\{ (\mathbb{E}_{t}G_{t}\nu
_{t})(\mathbb{E}_{s}G_{s}\nu _{s})-(\mathbb{E}_{t}G_{t}\nu _{t})^{2}\right\}
+\frac{K_{n}(t,s)}{n},  \label{f03}
\end{equation}
where $K_{n}(t,s)$ is bounded uniformly in $n,t$ and $s$.\ So by combining (%
\ref{f01}), (\ref{f02}), (\ref{f03}) and $(HT3),$ and by the $c_{2}$%
-inequality, we get for some $K$ that for any $\delta >0,$ for large enough
values of $n,$%
\begin{equation}
\left| s-t\right| \leq \delta \Longrightarrow \left|
N_{n}(t)-N_{n}(s)\right|^2 \leq K_{4}\left| s-t\right| ^{1+r}.
\end{equation}
Thus $\{N_{n}(t),t\in \lbrack 0,T]\}$ is asymptotically tight by Lemma 1 in 
\cite{sall-lo}, which is an adaptation of Example 2.2.12 in \cite{vaart}. To
finish the proof, we have to establish that finite-distributions of $%
N_{n}(t) $ converge to those of some Gaussian tight process $\mathbb{G}$.
For simplicity's sake, we do it in the two dimensional case, for $%
(N_{n}(t_{1}),N_{n}(t_{2})).$ Consider $N=aN_{n}(t_{1})+bN_{n}(t_{2}).$
Still for simpicity's sake, let us set 
\begin{equation*}
N_{n}(t_{1})=\frac{1}{\sqrt{n}}\sum_{j=1}^{n}g_{1}(X_{j})+\frac{1}{\sqrt{n}}%
\sum_{j=1}^{n}(G_{1,n}(X_{j})-G_{1}(X_{j}))\nu _{1}(X_{j})
\end{equation*}
and 
\begin{equation*}
N_{n}(t_{2})=\frac{1}{\sqrt{n}}\sum_{j=1}^{n}g_{2}(Y_{j})+\frac{1}{\sqrt{n}}%
\sum_{j=1}^{n}(G_{2,n}(X_{j})-G_{2}(Y_{j}))\nu _{2}(Y_{j}),
\end{equation*}
where the $(X_{j},Y_{j})^{\prime }s$ stand for the $%
(Y_{j}(t_{1}),Y_{j}(t_{2}))^{\prime }s$ as independent observations of $%
(X,Y),G_{1,n}$ (resp. $G_{2,n})$ is the empirical function based on $%
X_{1},...,X_{n}$ (resp. $Y_{1},...,Y_{n}$). Put 
\begin{equation*}
G(x,y)=P(X\leq x,Y\leq y),G_{1}(x)=G(x,+\infty ),G_{2}(y)=G(+\infty ,y).
\end{equation*}
Now let, for each $n\geq 1,$ $\varepsilon _{1,n}$ (resp. $\varepsilon
_{2,n}) $ be the quantile processes based respectively on $%
G_{1}(X_{1}),G_{1}(X_{2}),...,G_{1}(X_{n})$ (resp. $%
G_{2}(Y_{1}),G_{2}(Y_{2}),...,G_{2}(Y_{n}))$. It is not hard to see that 
\begin{equation*}
n^{-1/2}\sum_{j=1}^{n}(G_{1,n}(X_{j})-G_{1}(X_{j}))\nu
_{1}(X_{j})=\int_{0}^{1}-\varepsilon _{1,n}(s)\nu
_{1}(G_{1}^{-1}(s))ds+o_{P}(1)
\end{equation*}
and 
\begin{equation*}
n^{-1/2}\sum_{j=1}^{n}(G_{2,n}(Y_{j})-G_{2}(Y_{j}))\nu
_{2}(Y_{j})=\int_{0}^{1}-\varepsilon _{2,n}(s)\nu
_{2}(G_{2}^{-1}(s))ds+o_{P}(1).
\end{equation*}
Now let $\alpha _{1,n}$ and $\alpha _{2,n}$ be the empirical processes based
respectively on $G_{1}(X_{1}),$ $G_{1}(X_{2}),$ $...,$ $G_{1}(X_{n})$ and on $%
G_{2}(Y_{1}),G_{2}(Y_{2}),...,G_{2}(Y_{n}).$ We have (see \cite{shwell},
p.584) that $\alpha _{i,n}(s)=-\varepsilon _{i,n}(s)+o_{P}(1)$ uniformly in $%
s\in (0,1),$ which gives 
\begin{equation*}
\varepsilon _{n}(s,t)=(\varepsilon _{1,n}(s),\varepsilon _{2,n}(t))=-(\alpha
_{1,n}(s),\alpha _{2,n}(t))+o_{P}(1),
\end{equation*}
uniformly in $(s,t)\in (0,1)^{2}$. Now let us consider the functional
empirical process $\alpha _{n}$\ based on the $%
N_{j}=(G_{1}(X_{j}),G_{2}(Y_{j}))^{\prime }s,$ that is 
\begin{equation*}
\alpha _{n}(f)=\frac{1}{\sqrt{n}}\sum_{j=1}^{n}f(N_{j})-\mathbb{E}f(N_{j}),
\end{equation*}
where $f$ a real function defined on $(0,1)^{2}$ such that $\mathbb{E}%
f(N_{i})^{2}<\infty .$ Finally, let $\gamma _{n}$ the fonctional empirical
process based on the $(X_{i},Y_{i})^{\prime }s$ , defined for $h:\mathbb{R}%
^{2}\rightarrow \mathbb{R},$%
\begin{equation*}
\gamma _{n}(h)=\frac{1}{\sqrt{n}}\sum_{j=1}^{n}\left\{ h(X_{j},Y_{j})-%
\mathbb{E}h(X_{j},Y_{j})\right\} .
\end{equation*}

\noindent We have 
\begin{equation*}
\alpha _{n}(f)=\gamma _{n}(h_{f}),
\end{equation*}
for $h_{f}(x,y)=f(G_{1}(x),G_{2}(y)).$ We have by the classical results of
empirical process that $\{\gamma _{n}(h),$ $h\in \mathcal{H}\}$ converges to
a Gaussian process $\{G(h),h\in \mathcal{H}\}$ whenever $\mathcal{H}$ is a
Donsker class. It follows that $\{\gamma _{n}(1_{C}),$ $C\in \mathcal{C}\}$
converges to a Gaussian process $\{G(1_{C}),C\in \mathcal{C}\}$ whenever $%
\mathcal{C}$ is a Vapnik-Cervonenkis class. But $\mathcal{C}=\{1_{]-\infty
,x]\times ]-\infty ,y]},(x,y)\in \mathbb{R}^{2}\}$ is $VC$-class of index
not greater than 2. (see \cite{vaart} for $VC$-classes use to empirical
processes). Thus putting $h_{x,y}=1_{]-\infty ,x]\times ]-\infty ,y]},$ we
have 
\begin{equation*}
\gamma _{n}(x,y)=\gamma _{n}(h_{x,y})\leadsto G(h_{x,y})=\mathbb{G}(x,y),
\end{equation*}
in $\ell ^{\infty }(\mathbb{R}^{2})$ where $\mathbb{G}$ is a tight Gaussian
process such that 
\begin{equation*}
\mathbb{EG}(h_{1})\mathbb{G(}h_{2}))=\int (h_{1}(x,y)-\mathbb{E}%
_{h_{1}})(h_{2}(x,y)-\mathbb{E}_{h_{2}})dG(x,y).
\end{equation*}

\noindent Further, for $f_{1,s}=1_{[0,s]\times \lbrack 0,1]},$ $%
h_{f_{1,s}}(x,y)=1_{[0,s]\times \lbrack
0,1]}(G_{1}(x),G_{2}(y))=1_{]-\infty ,G_1^{-1}(s)]\times \mathbb{R}%
}(x,y)=h_{G^{-1}_1(s),+\infty }(x,y)$,

\begin{equation*}
\alpha _{1,n}(s)=\alpha _{n}(f_{1,s})=\gamma _{n}(h_{G^{-1}_{1}(s),+\infty })
\end{equation*}

\noindent and for $h_{f_{2,t}}(x,y)=h_{+\infty ,G^{-1}_2(t)}(x,y)$%
\begin{equation*}
\alpha _{2,n}(t)=\gamma _{n}(h_{+\infty ,G^{-1}_2(t)}(x,y))
\end{equation*}
Now, by using the Skorohod-Wichura-Dudley Theorem, we are entitled to
suppose that we are on a probability space such that 
\begin{equation*}
\sup_{(x,y)\in R^{2}}\left| \gamma _{n}(h_{x,y})-\mathbb{G}(h_{x,y})\right|
\rightarrow _{P}0.
\end{equation*}
Now, since the functions $\nu _{i}$\ are bounded, and putting $%
h_{1}(x,y)=g_{1}(x)$ and $h_{2}(x,y)=g_{2}(y),$ $%
N=aN_{n}(t_{1})+nN_{n}(t_{2})$ is equal to 
\begin{equation*}
{\small \left\{ a\mathbb{G}(h_{1})+b\mathbb{G}(h_{2})\right\} }
\end{equation*}
\begin{equation*}
+\int_{0}^{1}\left\{ a\mathbb{G}(h_{G^{-1}_1(s),+\infty })\nu
_{1}(G_{1}^{-1}(s))+b\mathbb{G}(h_{+\infty ,G^{-1}_2(s)})\nu
_{2}(G_{2}^{-1}(s))\right\} ds+o_{p}(1).
\end{equation*}
One easily proves that 
\begin{equation*}
\left\{ a\mathbb{G}(h_{1})+b\mathbb{G}(h_{2})\right\}
\end{equation*}
\begin{equation*}
+\int_{0}^{1}\left\{ a\mathbb{G}(h_{G^{-1}_1(s),+\infty })\nu
_{1}(G_{1}^{-1}(s))+b\mathbb{G}(h_{+\infty ,G^{-1}_2(s)})\nu
_{2}(G_{2}^{-1}(s))\right\} ds
\end{equation*}
is a Gaussian random variable since the second term is a Riemann integral,
which is a limit of finite linear combinations of Gaussian random variables.
Thus $N$ is asymptotically Gaussian. We are able to do the same for an
arbitrary finite-distribution $(N_{n}(t_{1}),...,N_{n}(t_{k})).$ The
computation of the limiting Gaussian process requires heavy calculations
done in \cite{logs}. The proof ends with providing the covariance
function $\Gamma _{1}$ of $\alpha _{t,n}(g_{t})$, $\Gamma _{1}$ of \ $\beta
_{n}(\nu _{t},t)$ and the covariance $\Gamma _{3}$ function between them.
\end{proof}

\subsection{Special cases}

Since the results are stated in a more general form and may appear very sophisticated, it seems necessary to show how they work for
common cases. We apply our results to two key examples in Welfare analysis : the class of Kakwani's and Shorrocks' statistics. These two examples are particularly interesting since they put the emphasis on the less deprived individuals within the whole population (with weight $n-j+1$) for Shorrock's statistic), or  within the marked individuals (with weight $Q-j+1$) for Kakwani's class of statistic including sen's measure). In both case, taking the weight at the power $k$ may lead to more accuracy in the statistical estimation.

\subsection{The Kakwani case}

\label{sec2}

We are now applying the general results to the Kakwani WMLG statistics of
parameter $k\geq 1$, defined by 
\begin{equation*}
J_{k,n}(t)=\frac{Q_{n}(t)}{n\sum_{j=1}^{Q_{n}(t)}j^{k}}%
\sum_{j=1}^{Q_{n}(t)}(Q_{n}(t)-j+1)^{k}d(\frac{Z(t)-Y_{j,n}(t)}{Z(t)}).
\end{equation*}
The way we are using here is to be repeated for any particular index. For
instance, the results in \cite{sall-lo} and \cite{lodurban} may be
rediscovered in this way. In this specific case, we turn the hypotheses
(HT1) and (HT2) on $W_{j}^{\prime }s$ to the $Y_{j}^{\prime }s$ as follows.
Suppose the $d.f.$ $G_{t}(x)=\mathbb{P}(Y_{j}(t)\leq x)$ admits a derivative 
$m_{t}(x).$ Put $G_{s,t}(u,v)=\mathbb{P}(X(t)\leq u,X(s)\leq v).$ Introduce :

\begin{itemize}
\item[(H0)] For $0\leq s,t\leq T,$ for some constant $K$, $%
\left|Z(s)-Z(t)\right|^2 \leq K\left| t-s\right| ^{1+r}.$

\item[(H1)] There exists a positive function $m$\ such that for $0\leq
s,t\leq T$, $u\in \mathbb{R},$ $0 < r < 1)$
\begin{equation*}
\left\vert m_{t}(u)-m_{s}(u)\right\vert \leq m(u)\left\vert t-s\right\vert
^{(1+r)/2}\text{ and }\int_{0}^{Z_{2}}m(u)du=K_{1}<\infty .
\end{equation*}%
and 
\begin{equation*}
\sup_{x\in (Z_{1},Z_{2})}\left\vert m(x)\right\vert =M_{0}<\infty .
\end{equation*}

\item[(H2)] For $0\leq s,t\leq T,$ for some constant $K$, 
\begin{equation*}
\sup_{u\geq 0}\left| G_{t,s}(u,u)-G_{s}(u)\right| \leq K\left| t-s\right|
^{1+r}
\end{equation*}
and 
\begin{equation*}
\left| G_{t}(Z(t))-G_{t}(Z(t)\wedge Z(s))\right| \leq K\left| t-s\right|
^{1+r}
\end{equation*}

\item[(H3)] For 0$\leq s,t\leq T,$ for some constant $K$, $\mathbb{E}\left|
Y(t)-Y(s)\right| ^{2}\leq K\left| t-s\right| ^{1+r}.$
\end{itemize}

\bigskip

We check, in the Kakwani case, that the representation of Theorem \ref{theo1}%
\ holds with $h(Q_{n}(t),n)=n^{k}$, $c(x,y)=(x-y)^{k}$, $\pi (x,y)=\frac{%
y^{k}}{x}$ and then 
\begin{equation*}
H_{\pi }(t)=G_{t}(Z(t))^{k}/(k+1),
\end{equation*}%
\begin{equation*}
H_{c}(t)=\int_{0}^{Z(t)}(G_{t}(Z(t))-G_{t}(y))^{k}\gamma _{t}({y)}dG_{t}(y),
\end{equation*}%
so that 
\begin{equation*}
J_{k}(t)=H_{c}(t)/H_{\pi
}(t)=(k+1)\int_{0}^{Z(t)}(1-G_{t}(y)/G_{t}(Z(t)))^{k-1}\gamma _{t}({y)}%
dG_{t}(y).
\end{equation*}%
Next 
\begin{equation*}
K_{c}(t)=k\int_{0}^{Z(t)}(G_{t}(Z(t))-G_{t}(y))^{k-1}\gamma
_{t}(y)dG_{t}(y),K_{\pi }(t)=-G_{t}(Z(t))^{k-1}/(k+1),
\end{equation*}%
and then 
\begin{equation*}
K(t)=(k+1)k\left\{ G_{t}(Z(t))^{-k-1}r_{k-1}(t)+r_{k}(t))\right\} ,
\end{equation*}%
where 
\begin{equation*}
r_{k}(t)=\int_{0}^{Z(t)}(G_{t}(Z(t))-G_{t}(y))^{k}\gamma _{t}(y)dG_{t}(y).
\end{equation*}%
For 
\begin{equation*}
g_{t}(\cdot )=(k+1)(1-G_{t}(f_{t}(\cdot ))/G_{t}(Z(t)))^{k}\gamma_t
(f_{t}(\cdot ))+K(t)1_{(f_{t}(\cdot )\leq Z(t))}.
\end{equation*}%
and 
\begin{equation*}
\nu _{t}(y)=-k(k+1)G_{t}(Z(t))^{-2k-1}
\end{equation*}%

\begin{equation*}
\times \left\{
(G_{t}(Z(t))^{k+1}(G_{t}(Z(t))-G_{t}(y))^{k-1}\frac{Z(t)-y}{Z(t)}%
-(k+1)r_{k}(t)G_{t}(y)^{k-1}\right\} 1_{(f_{t}(\cdot )\leq Z(t))}.
\end{equation*}%

\noindent we will get the representation 
\begin{equation*}
\sqrt{n}(J_{k,n}(t)-J_{k}(t))=N_{n}(t)+o_{P}(1)
\end{equation*}%
with 
\begin{equation*}
N_{n}(t)=\alpha _{t,n}(g_{t})+\beta _{n}(\nu _{t},t)
\end{equation*}

\begin{theorem}
\label{theo3} Let (HL1), (HL3), (HL4) , (H0)-(H3) hold. Then $\{\sqrt{n}%
(J_{n}(t)-J(t)),0\leq t\leq T\}$ converges in $\ell ^{\infty }([0,T])$ to a
centered Gaussian process with covariance function $\Gamma $ given in
Theorem \ref{theo2}
\end{theorem}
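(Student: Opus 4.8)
\noindent The plan is to deduce Theorem~\ref{theo3} from Theorem~\ref{theo2} by verifying that, in the Kakwani case, the primitive hypotheses (HL1), (HL3), (HL4) and (H0)--(H3) entail every assumption required by Theorem~\ref{theo2}. The passage preceding the statement has already exhibited the explicit shape data $h(Q_n(t),n)=n^{k}$, $c(x,y)=(x-y)^{k}$, $\pi(x,y)=y^{k}/x$, together with the resulting $H_c$, $H_\pi$, $K$, $g_t$ and $\nu_t$. Since $c$ is polynomial in $x-y$ and $\pi$ is rational with denominator bounded below by (HL1), the shape conditions (HS1)--(HS3) and the regularity conditions (HR1)--(HR3) follow by direct inspection: e.g. $\partial c/\partial y=-k(x-y)^{k-1}$ and $\partial \pi/\partial y=ky^{k-1}/x$ are monotone in $y$, giving (HR2), while (HR3) is exactly where (HL1) is used, keeping $G_t(Z(t))\in[\beta,\xi]\subset(0,1)$ so that $H_\pi(t)=G_t(Z(t))^{k}/(k+1)$ and $H_c(t)$ stay bounded away from $0$ and $\infty$ uniformly in $t$. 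The Glivenko--Cantelli requirement (HL2) is not assumed separately because it follows from (HL3) and (H0): the index map $t\mapsto G_t(Z(t))$ is continuous on the compact $[0,T]$ (H\"older in $t$ for $Z$ by (H0), uniformly continuous for $G$ by (HL3)), so $\mathcal F_0$ is totally bounded and hence Glivenko--Cantelli.

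\noindent With the representation of Theorem~\ref{theo1} thus in force, the only genuine work is to pass from the abstract increment conditions (HT1)--(HT3) on $W(t)=g_t(Y(t))$ to the concrete conditions (H0)--(H3) on the pair $(Y(t),Z(t))$. For (HT1) I would split
\[
W(t)-W(s)=\bigl(g_t(Y(t))-g_t(Y(s))\bigr)+\bigl(g_t(Y(s))-g_s(Y(s))\bigr).
\]
On the first bracket the smooth part of $g_t$ is Lipschitz in its argument, with constant uniform in $t$ coming from (HL4) and from the lower bound on $G_t(Z(t))$ in (HL1); its $L^2$ increment is then bounded by a multiple of $\mathbb E|Y(t)-Y(s)|^{2}$, i.e. by (H3). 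The second bracket measures the $t$-variation of the coefficients: $\gamma_t$ varies through $Z(t)$, controlled by (H0); $G_t$ and $G_t(Z(t))$ vary by at most $K_1|t-s|^{(1+r)/2}$ after integrating the density bound in (H1); and $K(t)$, a fixed function of $G_t(Z(t))$ and of the integrals $r_k(t)$, inherits the same modulus. Squaring these $O(|t-s|^{(1+r)/2})$ bounds produces the exponent $1+r$ demanded by (HT1). Condition (HT2) is the same computation applied to $\eta(t)=\int g_t\,dG_t$, whose $t$-increment is again $O(|t-s|^{(1+r)/2})$.

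\noindent For (HT3) I would insert the explicit $\nu_t$ into $c(\nu_t,t)$, $g(\nu_t,\nu_s,t,s)$ and $\mathbb E_tG_t\nu_t$. Since $\nu_t$ is a polynomial in $G_t$ times the poverty gap times the indicator $1_{(Y(t)\le Z(t))}$, each of these functionals is a continuous function of the same building blocks, namely $Z(t)$, $G_t$, and the joint law $G_{t,s}$; their $t,s$-increments are therefore controlled termwise by (H0), (H1) and by the joint-distribution estimate (H2). Collecting the pieces and using the $c_2$-inequality yields precisely the bound (\ref{unif04}). Once (HT1)--(HT3) are secured, Theorem~\ref{theo2} applies verbatim and delivers convergence in $\ell^\infty([0,T])$ to the centered Gaussian process with the covariance $\Gamma$ stated there.

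\noindent \textbf{The main obstacle} is the indicator $1_{(Y(t)\le Z(t))}$ sitting inside both $g_t$ and $\nu_t$: it is not Lipschitz, so the increment estimates above break down on the event where the two indicators $1_{(Y(t)\le Z(t))}$ and $1_{(Y(s)\le Z(s))}$ disagree. Controlling the probability of that symmetric-difference event is exactly the role of the two-part hypothesis (H2): the bound on $\sup_u|G_{t,s}(u,u)-G_s(u)|$ handles the crossing of $Y$ across a common level, while the bound on $|G_t(Z(t))-G_t(Z(t)\wedge Z(s))|$ absorbs the movement of the threshold itself. I expect the careful bookkeeping of these indicator terms, rather than the smooth part, to be where the analysis must be done most delicately.
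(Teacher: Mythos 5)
Your overall strategy is the paper's: Theorem \ref{theo3} is proved by reduction to Theorem \ref{theo2}, reading the shape and regularity data off the explicit Kakwani quantities $h=n^{k}$, $c(x,y)=(x-y)^{k}$, $\pi(x,y)=y^{k}/x$, and then verifying the increment hypotheses from (H0)--(H3). Your treatment of (HT1)--(HT2) --- separating a smooth part from the indicator part of $g_t$, controlling the smooth increments by (H0), (H1), (H3) and the mean value theorem, and identifying the event where the two indicators $1_{(Y(t)\le Z(t))}$ and $1_{(Y(s)\le Z(s))}$ disagree as exactly the place where (H2) enters --- is what the paper does in its appendix, with the cosmetic difference that it decomposes $W=W_{1}+W_{2}$, $W_{1}(t)=K(t)1_{(Y(t)\le Z(t))}$, rather than splitting by argument versus coefficient variation. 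On one point you are more complete than the paper: you address (HT3) explicitly, whereas the paper's proof asserts that checking (HT1) and (HT2) is ``enough'' and its appendix never verifies (HT3) at all.

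There is, however, one genuine flaw in your argument: the derivation of (HL2). The inference ``$\mathcal F_0$ is totally bounded and hence Glivenko--Cantelli'' is false as a general implication. For instance, the class of indicators of all finite subsets of $[0,1]$ under a continuous law $P$ is totally bounded in $L^{1}(P)$ (every element is at distance $0$ from the zero function), yet $\sup_f|P_nf-Pf|=1$ almost surely; total boundedness of the index set gives no control whatsoever on the uniform deviation of the empirical measure --- some bracketing, entropy, or asymptotic-equicontinuity input is indispensable. The paper closes this step differently: it notes that the increment bound $\mathbb{E}\left|1_{(Y(t)\le Z(t))}-1_{(Y(s)\le Z(s))}\right|^{2}\le K|t-s|^{1+r}$ (its estimate (\ref{f003}), proved from (H0) and (H2)) makes the process $\{\sqrt n\,(G_{t,n}(Z(t))-G_t(Z(t))),\ 0\le t\le T\}$ asymptotically tight, via the same Lemma 1 of \cite{sall-lo} already used in the proof of Theorem \ref{theo2}, whence $\sup_{0\le t\le T}|G_{t,n}(Z(t))-G_t(Z(t))|=O_{P}^{\ast}(n^{-1/2})\rightarrow 0$, which is precisely (HL2). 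Your proof is repaired by replacing the total-boundedness argument with this tightness argument; note that the estimate it requires is one you already invoke when handling the indicator terms in (HT1), so no new hypothesis is needed.
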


\bigskip

\begin{proof}
We begin to remark that $(H3)$ ensures that $\{\sqrt{n}%
(G_{t,n}(Z(t))-G_{t}(Z(t))),0\leq t\leq T\}$ is asymptocially tight and
hence $(HL2)$. It is then enough to show that $(HT1)$ and $(HT2)$ hold from $%
(H0),$ $(H1),(H2)$ and $(H3)$. But this follows from routine calculations
that we only sketch here. We place these calculation in the appendix.
\end{proof}

\section{The Shorrocks-Thon-like case}

\label{sec3}

We apply our results to the Shorrocks-Thon WMLG statistics measures defined by 
\begin{equation*}
J_{n}(t)=\frac{1}{n^{2}}\sum_{j=1}^{Q_{n}(t)}(2n-2-j+1)\text{ }d(\frac{%
Z(t)-Y_{j,n}(t)}{Z(t)}).
\end{equation*}
This is the Thon index. One obtains the Shorrocks one by replacing $1/n^{2} $
by $1/(n(n+1).$ We also check here that representation of Theorem \ref{theo1}
holds in the simple case corresponding to $(HS3)$, with $c(x,y)=(x-y).$ In
this case, $\pi $ is useless. Then

\begin{equation*}
J(t)=H_{c}(t)=2\int_{0}^{Z(t)}(1-G_{t}(y))\gamma _{t}(y)dG_{t}(y),
\end{equation*}
\begin{equation*}
K(t)=K_{c}(t)=0,\text{ \ }\nu (y)=\nu _{c}(y)=-2\gamma _{t}(y).
\end{equation*}
Here again $\{\sqrt{n}(J_{n}(t)-J(t)),0\leq t\leq T\}$ has the same
asymptotic behaviour described in Theorem \ref{theo3} with 
\begin{equation*}
g_{t}(y)=2(1-G_{t}(y))\text{ }and\text{ }\nu (y)=\nu _{c}(y)=-2\gamma _{t}(y)
\end{equation*}

\noindent under the same hypotheses (HL1)-(HL4) and (H0)-(H3)

\section{Estimation of the WLMG statistic variation}

\label{sec4}

Although they are very expensive to collect, longitudinal data are highly preferred
for adequate estimate of the absolute index variation $\Delta
J(t,s)=J(s)-J(t),$ which is the exact measure of WMLG change between
the periods $t$\ and $s$ and the associate relative WMLG variation $\Delta
RJ(t,s)=(J(s)-J(t))/J(t)$. Their respective natural estimators are of course  $\Delta J_{n}(t,s)=J_{n}(s)-J_{n}(t)$ and $\Delta
RJ_{n}(t,s)=(J_{n}(s)-J_{n}(t))/J_{n}(t).$ Our previous results yield the
follow

\begin{theorem}
Under the assumptioms of Theorem 1 or Theorem 2, 
\begin{equation*}
\sqrt{n}(\Delta J_{n}(t,s)-\Delta J(t,s))\rightarrow \mathcal{N}(0,\Gamma
_{4}(s,t)),
\end{equation*}%
where $\Gamma _{4}(s,t)=\Gamma (t,t)+\Gamma (s,s)-2\Gamma (t,s),$ and 
\begin{equation*}
\sqrt{n}(\Delta RJ_{n}(t,s)-\Delta RJ(t,s))\rightarrow \mathcal{N}(0,\Gamma
_{5}(t,s)).
\end{equation*}%
where 
\begin{equation*}
\Gamma _{5}(t,s)=a_{1}^{2}\Gamma (t,t)+a_{2}^{2}\Gamma
(s,s)+2a_{1}a_{2}\Gamma (s,t)
\end{equation*}%
with 
\begin{equation*}
a_{1}=-(1+\Delta RJ(t,s))/J(t)
\end{equation*}%
\begin{equation*}
a_{2}=1/J(t).
\end{equation*}
\end{theorem}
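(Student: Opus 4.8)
The plan is to derive both limit laws directly from the joint weak convergence already supplied by \thmref{theo2}, without re-entering the empirical-process machinery. Since \thmref{theo2} gives convergence of $\{\sqrt{n}(J_{n}(t)-J(t)),0\leq t\leq T\}$ to the centered Gaussian process $\mathbb{G}$ in $\ell^{\infty}([0,T])$, all finite-dimensional distributions converge as well. In particular, the pair $\left(\sqrt{n}(J_{n}(t)-J(t)),\sqrt{n}(J_{n}(s)-J(s))\right)$ converges in distribution to $(\mathbb{G}(t),\mathbb{G}(s))$, a centered bivariate Gaussian vector whose covariance matrix $\Sigma$ has entries $\Sigma_{11}=\Gamma(t,t)$, $\Sigma_{22}=\Gamma(s,s)$ and $\Sigma_{12}=\Sigma_{21}=\Gamma(t,s)=\Gamma(s,t)$, the last equality by symmetry of the covariance function. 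Both assertions then follow by applying a smooth transformation to this vector.

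For the absolute variation, I would simply write $\sqrt{n}(\Delta J_{n}(t,s)-\Delta J(t,s))=\sqrt{n}(J_{n}(s)-J(s))-\sqrt{n}(J_{n}(t)-J(t))$, which is the image of the vector above under the continuous linear map $(u,v)\mapsto v-u$. By the continuous mapping theorem the limit is $\mathbb{G}(s)-\mathbb{G}(t)$, a centered Gaussian variable with variance $\mathrm{Var}(\mathbb{G}(s))+\mathrm{Var}(\mathbb{G}(t))-2\,\mathrm{Cov}(\mathbb{G}(s),\mathbb{G}(t))=\Gamma(s,s)+\Gamma(t,t)-2\Gamma(s,t)=\Gamma_{4}(s,t)$, as claimed.

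For the relative variation, the natural tool is the delta method. I would set $\phi(x,y)=y/x-1$, so that $\Delta RJ_{n}(t,s)=\phi(J_{n}(t),J_{n}(s))$ and $\Delta RJ(t,s)=\phi(J(t),J(s))$. By (HR3) both $H_{c}(t)$ and $H_{\pi}(t)$ lie strictly between $H_{0}>0$ and $H_{\infty}<\infty$, so $J(t)=H_{c}(t)/H_{\pi}(t)$ is bounded away from zero and $\phi$ is continuously differentiable at $(J(t),J(s))$, with $\partial\phi/\partial x=-y/x^{2}$ and $\partial\phi/\partial y=1/x$. Evaluating the partials at $(J(t),J(s))$ and using $1+\Delta RJ(t,s)=J(s)/J(t)$ gives exactly $a_{1}=-J(s)/J(t)^{2}=-(1+\Delta RJ(t,s))/J(t)$ and $a_{2}=1/J(t)$. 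The delta method then produces asymptotic normality with variance $(a_{1},a_{2})\,\Sigma\,(a_{1},a_{2})^{\top}=a_{1}^{2}\Gamma(t,t)+a_{2}^{2}\Gamma(s,s)+2a_{1}a_{2}\Gamma(s,t)=\Gamma_{5}(t,s)$, which is the stated expression.

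There is no serious obstacle here: once \thmref{theo2} has delivered the joint Gaussian limit, both parts are routine consequences of the continuous mapping theorem and the delta method. The only point deserving attention is the differentiability of $\phi$ at the target point, that is, $J(t)\neq 0$; this is precisely what (HR3) guarantees, since it forces $J(t)\geq H_{0}/H_{\infty}>0$ uniformly in $t$. I would close by noting that the same argument extends verbatim to any finite collection of time points, yielding the joint asymptotic law of several index variations simultaneously.
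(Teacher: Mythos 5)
Your proof is correct and follows essentially the same route as the paper: both parts are read off from the joint Gaussian limit supplied by Theorem~\ref{theo2}, the absolute variation via the continuous mapping theorem applied to $(u,v)\mapsto v-u$, and the relative variation via a linearization of the ratio (the paper phrases this through the Skorohod--Wichura--Dudley representation, you through the delta method, which amounts to the same argument). If anything, your write-up is more careful than the paper's sketch, since you verify explicitly that (HR3) keeps $J(t)$ bounded away from zero so that the map $\phi(x,y)=y/x-1$ is differentiable at the target point, and your limiting combination $a_{1}\mathbb{G}(t)+a_{2}\mathbb{G}(s)$ corrects the paper's misprinted $a_{2}\mathbb{G}(s)+a_{2}\mathbb{G}(s)$.
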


The proof is straightforward. We also might consider the convergence of $%
\sqrt{n}(\Delta J_{n}(t,s)-\Delta J(t,s))$ to the Gaussian process $\Delta 
\mathbb{G}(t,s)=\mathbb{G}(s)-\mathbb{G}(t)$ in $\ell ^{\infty }([0,T]^{2}).$
Anyway for fixed $t$ and $s,$ $\sqrt{n}(\Delta J_{n}(t,s)-\Delta J(t,s))$
converges to the Gaussian random variable $\Delta \mathbb{G}(t,s)=\mathbb{G}%
(t)-\mathbb{G}(s)$ by the continuity Theorem with $\Gamma _{4}(s,t)$ as
variance. Also, by using the Skorohod-Wichura-Dudley Theorem, we have 
\begin{equation*}
\sqrt{n}((\Delta RJ_{n}(t,s)-\Delta RJ(t,s))=a_{2}\mathbb{G}(s)+a_{2}%
\mathbb{G}(s)+o_{p}(1)
\end{equation*}%
An important application of the second part of this theorem is related to checking 
the achievement of specific goals. One may, within a national or regional strategy, whish to have
some deprivation limited to some extent. For example, the UN has assigned a number of goals,
named Millennium Development Goals (MDG), to its members. We are concerned here by one of  them. 
Indeed, it is whished to halve the extreme poverty in the world in year $s=2015$ starting from year $t=2000$.
When the WMLG statistic is a poverty measure, we may use $\Delta RJ(t,s))$ and check whether it is less than $-0.5$. And an $(1-\alpha )-$confidence
interval $IR(\alpha )$ based on these results is 
\begin{equation*}
\lbrack \Delta RJ_{n}(t,s)-n^{-1/2}\sqrt{\Gamma _{5}(s,t)}u_{1-\alpha
/2},\Delta RJ_{n}(t,s)+n^{-1/2}\sqrt{\Gamma _{5}(s,t)}u_{1-\alpha /2}]
\end{equation*}%
\begin{equation*}
\equiv \lbrack J^{0}(\alpha ),J^{1}(\alpha )],
\end{equation*}%
where $\mathbb{P}(\mathcal{N}(0,1)\leq u_{1-\alpha /2})=\alpha .$ This MDG
will be reported achieved at the 95$\%$ level if the number $J^{1}(\alpha
)\leq -0,5.$

\subsection{Datadriven applications and variance computations}

We apply our results in Economics and Welfare analysis. Especially, we consider the
household surveys in Senegal in 2011 (ESAM II) and in 2006 (EPS) from which we construct pseudo-panel data
and apply our results. 

\subsubsection{Variance computations for Senegalese data}

We apply our results to Senegalese data. We do not really have longitudinal
data. So we have constructed pseudo-panel data of size $n = 116$, from two
surveys: ESAM II conducted from 2001 to 2002 and EPS from 2005 to 2006. We
get two series $X^{1}$ and $X^{2}$. We present below the values of $%
\Gamma_{I}(1,2)$ denoted here $\gamma(1)$, $\Gamma_{J}(1,2)$ denoted here $%
\gamma(2)$ and $\Gamma(1,2)$ denoted here $\gamma(3)$.\newline

\noindent When constructing pseudo-panel data, we get small sizes like $n=116$
here. We use these sizes to compute the asymptotic variances in our results
with nonparametric methods. In real contexts, we should use high sizes
comparable to those of the real databases, that is around ten thousands,
like in the Senegalese case. Nevertheless, we back on medium sizes, for
instance $n=696,$ which give very accurate confidence intervals as shown in
the tables below.\\

\noindent Before we present the outcomes, let us say some words on the
packages. We provide different R script files at:\\

\begin{center}
\textit{http://www.ufrsat.org/lerstad/resources/sallmergslo01.zip}
\end{center}

\bigskip

\noindent The user should already have his data in two files \textit{%
data1.txt} and \textit{data2.txt}. The first script file named after \textit{%
gamma$_{-}$mergslo1.dat} provides the values of $\gamma(1)$, $\gamma(2)$ and 
$\gamma(3)$ for the FGT measure for $\alpha=0,1,2$ and for the six
inequality measures used here. The second script file named \textit{gamma$%
_{-}$mergslo2.dat} performs the same for the Shorrocks measure. Finally, 
\textit{gamma$_{-}$mergslo3.dat} concerns the kakwani measures. Unless the
user uploads new \textit{data1.txt} and \textit{data2.txt} files, the
outcomes should the same as those presented in the Appendix.\newline

\subsubsection{Analysis}

First of all, we find that, at an asymptotical level, all our inequality
measures and poverty indices used here have decreased. 
When inspecting the asymptotic variance, we see that for the poverty index,
the FGT and the Kakwani classes respectively for $\alpha=1$, $\alpha=2$ and $%
k=1$ and $k=2$ have the minimum variance, specially for $\alpha=2$ and $k=2$.
This advocates for the use of the Kakwani and the FGT measures for poverty
reduction evaluation.

\begin{table}[h!]
\begin{tabular}{l||c|c|c}
\hline
Index J & $\Delta J(1,2)$ & $\Gamma_4(1,2)$ & $CI_{95\%}(\Delta J(1,2))$ \\ 
\hline\hline
SHOR & $-0.03024621$ & $0.02353406$ & $[-0.04264967,-0.01985518]$ \\ \hline
KAK$(1)$ & $-0.02108905$ & $0.01097123$ & $[-0.02982085, -0.01425729]$ \\ 
\hline
KAK$(2)$ & $-0.02055594$ & $0.01007820$ & $[-0.02961271,-0.01469601]$ \\ 
\hline
FGT$(0)$ & $-0.05977098$ & $0.3170756$ & $[-0.09355847, -0.009889805]$ \\ 
\hline
FGT$(1)$ & $-0.01859332$ & $0.00922992$ & $[ -0.02620413, -0.01192899]$ \\ 
\hline
FGT$(2)$ & $-0.00432289$ & $0.0008381113$ & $[-0.007194404, -0.002892781]$
\\ \hline\hline
\end{tabular}%

\bigskip

\caption{Variations of the poverty indices}
\label{var-pov-ind}
\end{table}

\section{Conclusion}

We obtained asymptotic laws of the UTB WMLG statistics with in mind, among
other targets, the uniform estimation of the variation $\Delta J(t,s)$ and the relative variation $\Delta RJ(t,s)$. The results are only
illustrated  with simple datadriven applications to income databases in Senegal. This opens large datadriven application in whole economic areas.
In the theoritical hand, the Lower Threshold Based weighted mean loss or gain statistics is to be studied in accordance with heavy tail conditions
and to be applied in Insurance and HIV/VIH fields.


\newpage

\section{Appendix}

Put 
\begin{equation*}
W(t)=g_{t}(Y(t))=W_{1}(t)+W_{2}(t)
\end{equation*}
with 
\begin{equation*}
W_{1}(t)=K(t)1_{(Y(t)\leq Z(t))}
\end{equation*}
and 
\begin{equation*}
W_{2}(t)=(k+1)(1-G_{t}(Y(t))/G_{t}(Z(t)))^{k}\gamma _{t}(Y(t)).
\end{equation*}
We have first to prove that for $i=1,2$, 
\begin{equation*}
\mathbb{E}\left| W_{i}(s)-W_{i}(t)\right| ^{2}\leq K\left| t-s\right| ^{1+r}
\end{equation*}
Based on the expression of $K(t)$ and on the facts that $r_{k}(t)$ and $%
G_{t}(u)$ for $Z_{1}\leq u\leq Z_{2}$ are uniformly bounded for $t\in
\lbrack 0,T],$ it suffices to prove that 
\begin{equation}
\left| G_{t}(Z(t))^{-k}-G_{s}(Z(s))^{-k}\right| \leq K\left| s-t\right|
^{(1+r)/2}  \label{f001}
\end{equation}
for $k\geq 1,$%
\begin{equation}
\left| r_{k}(t)-r_{k}(s)\right| \leq K\left| s-t\right| ^{(1+r)/2}
\label{f002}
\end{equation}
and 
\begin{equation}
\mathbb{E}\left| 1_{(Y(t)\leq Z(t))}-1_{(Y(s)\leq Z(s)}\right| ^{2}\leq
K\left| t-s\right| ^{1+r}.  \label{f003}
\end{equation}
This would help to conclude with the $c_{2}-inequality$ that$_{\text{ }}$%
\begin{equation}
\mathbb{E}\left| W_{1}(s)-W_{1}(t)\right| ^{2}\leq K\left| t-s\right| ^{1+r}.
\label{F1}
\end{equation}
Let us establish (\ref{f01}). We have 
\begin{equation*}
\left| G_{t}(Z(t))^{-k}-G_{s}(Z(s))^{-k}\right| \leq \left|
G_{t}(Z(t))^{-k}-G_{t}(Z(s))^{-k}\right| +\left|
G_{t}(Z(s))^{-k}-G_{s}(Z(s))^{-k}\right|
\end{equation*}
\begin{equation*}
\leq k\left| Z(t)-Z(s)\right| m_{t}(Z(s,t)G_{t}(Z(s,t))^{-k-1}+k\left|
G_{t}(Z(s))-G_{s}(Z(s))\right| B(s,t)^{-k-1},
\end{equation*}
where $Z(s,t)$ lies between $Z(t)$ and $Z(s)$ and $B(s,t)$ lies between $%
G_{t}(Z(s))$ and $G_{s}(Z(s)).$ We then get 
\begin{equation*}
\left| G_{t}(Z(t))^{-k}-G_{s}(Z(s))^{-k}\right| \leq k\beta ^{-2k}\zeta
^{k-1}K\left\{ \left| Z(t)-Z(s)\right| +\left|
G_{t}(Z(s))-G_{s}(Z(s))\right| \right\}
\end{equation*}
\begin{equation}
\leq K\left| s-t\right| ^{1+r/2}.  \label{app01}
\end{equation}
Now we show (\ref{f02}) 
\begin{equation*}
\left| r_{0}(t)-r_{0}(s)\right| =\left| \int_{0}^{Z(t)}\gamma
_{t}(u)m_{t}(u)du-\int_{0}^{Z(s)}\gamma _{s}(u)m_{s}(u)du\right|
\end{equation*}
\begin{equation*}
\leq \int_{0}^{Z(s)\wedge Z(t)}\left| \gamma _{t}(u)m_{t}(u)-\gamma
_{s}(u)m_{s}(u)\right| +\int_{Z(s)\wedge Z(t)}^{Z(s)}\gamma _{t}(u)m_{t}(u)du
\end{equation*}
\begin{equation*}
+\int_{Z(s)\wedge Z(t)}^{Z(s)}\gamma _{s}(u)m_{s}(u)du
\end{equation*}
Since $\gamma _{s}$ is uniformly bounded, we have by $(H0)$ and $(H1),$%
\begin{equation}
\left| \int_{Z(s)\wedge Z(t)}^{Z(t)}\gamma
_{t}(u)m_{t}(u)du+\int_{Z(t)\wedge Z(s)}^{Z(s)}\gamma
_{s}(u)m_{s}(u)du\right| \leq K\left| s-t\right| ^{1+r/2}  \label{app02}
\end{equation}
Further 
\begin{equation}
\int_{0}^{Z(s)\wedge Z(t)}\left| \gamma _{t}(u)m_{t}(u)-\gamma
_{s}(u)m_{s}(u)\right| \leq \int_{0}^{Z_{2}}\left| \gamma _{t}(u)-\gamma
_{s}(u)\right| m_{t}(u)du  \label{app03}
\end{equation}
\begin{equation*}
+\int_{0}^{Z_{2}}\gamma _{s}(u)\left| m_{t}(u)-m_{s}(u)\right| du,
\end{equation*}
and, since $\gamma _{t}(x)=d((Z(t)-u)/Z(t)),$ we get that 
\begin{equation}
\int_{0}^{Z(s)\wedge Z(t)}\left| \gamma _{t}(u)-\gamma _{s}(u)\right|
m_{t}(u)du\leq  \label{app04}
\end{equation}
\begin{equation*}
\int_{0}^{Z(s)\wedge Z(t)}\left| Z(t)-Z(s)\right| \text{ }B(s,t)^{-1}\text{ }%
u\text{ }d^{\prime }((Z(s,t)-u))/Z(s,t))\text{ }m_{t}(u)du,
\end{equation*}
where $Z(s,t)$ lies between $Z(t)$ and $Z(s).$ Then 
\begin{equation}
\int_{0}^{Z(s)\wedge Z(t)}\left| \gamma _{t}(u)-\gamma _{s}(u)\right|
m_{t}(u)du\leq K_{1}\beta ^{-1}Z_{2}^{2}\text{ }\left| Z(t)-Z(s)\right| \leq
K\left| s-t\right| ^{(1+r)/2}.  \label{app05}
\end{equation}
From (\ref{app03})-(\ref{app05}), we conclude that 
\begin{equation*}
\left| r_{0}(t)-r_{0}(s)\right| \leq K\left| s-t\right| ^{(1+r)/2}.
\end{equation*}
and for $k>1,$%
\begin{equation*}
r_{k}(t)=\int_{0}^{Z(t)\wedge Z(s)}(G_{t}(y)-G_{t}(Z))^{k}\gamma
_{t}(u)m_{t}(u)du
\end{equation*}
\begin{equation*}
+\int_{Z(t)\wedge Z(s)}^{Z(t)}(G_{t}(y)-G_{t}(Z))^{k}\gamma _{t}(u)m_{t}(u)du
\end{equation*}
and 
\begin{equation*}
r_{k}(s)=\int_{0}^{Z(t)\wedge Z(s)}(G_{s}(y)-G_{s}(Z))^{k}\gamma
_{s}(u)m_{s}(u)du
\end{equation*}
\begin{equation*}
+\int_{Z(t)\wedge Z(s)}^{Z(s)}(G_{s}(y)-G_{s}(Z))^{k}\gamma _{s}(u)m_{s}(u)du
\end{equation*}
with 
\begin{equation*}
\left| \int_{Z(t)\wedge Z(s)}^{Z(t)}(G_{t}(y)-G_{t}(Z))^{k}\gamma
_{t}(u)m_{t}(u)du\right|
\end{equation*}
and 
\begin{equation*}
\left| \int_{Z(t)\wedge Z(s)}^{Z(s)}(G_{s}(y)-G_{s}(Z))^{k}\gamma
_{s}(u)m_{s}(u)du\right|
\end{equation*}
less than $2M_{0}\beta ^{k}\left| Z(t)-Z(s)\right| .$ Now $\left|
r_{k}(t)-r_{k}(s)\right| $ is less than $A+B$ with 
\begin{equation*}
A=\int_{0}^{Z(t)\wedge Z(s)}\left|
(G_{t}(y)-G_{t}(Z))^{k}-(G_{s}(y)-G_{s}(Z))^{k}\right| \text{ }\gamma
_{s}(u)m_{s}(u)du
\end{equation*}
and 
\begin{equation*}
B=\int_{0}^{Z(t)\wedge Z(s)}(G_{t}(y)-G_{t}(Z))^{k}\text{ }\left| \gamma
_{s}(u)m_{s}(u)-\gamma _{t}(u)m_{t}(u)\right| du.
\end{equation*}
By (H2), $A$ is less than $2kZ_{2}\xi ^{k-1}(\int_{0}^{Z_{2}}m(u)du)\left|
s-t\right| ^{(1+r)/2}$ and $B\leq K\left| s-t\right| ^{(1+r)/2}$ by (\ref%
{app03}), (\ref{app04}) and (\ref{app05}). Then for $k\geq 1,$%
\begin{equation*}
\left| r_{k}(t)-r_{k}(s)\right| \leq K\left| s-t\right| ^{(1+r)/2},
\end{equation*}
which proves (\ref{f02}). Let us finally prove (\ref{f03}). We have by $(H2),$
for a fixed z, 
\begin{equation*}
\mathbb{E}\left| 1_{(Y(t)\leq z))}-1_{(Y(s)\leq z)}\right| ^{2}\leq \left|
G_{t}(z)-G_{t,s}(z,z))\right| +\left| G_{s}(z))-G_{t,s}(z,z)\right| \leq
K\left| s-t\right| ^{1+r},
\end{equation*}
for some constant $K$. Then by the $c_{2}$-inequality, 
\begin{equation*}
\mathbb{E}\left| 1_{(Y(t)\leq Z(t))}-1_{(Y(s)\leq Z(s))}\right| ^{2}\leq 2%
\mathbb{E}\left| 1_{(Y(t)\leq Z(t))}-1_{(Y(s)\leq Z(t))}\right| ^{2}
\end{equation*}
\begin{equation*}
+2\mathbb{E}\left| 1_{(Y(s)\leq Z(t))}-1_{(Y(s)\leq Z(s))}\right| ^{2}
\end{equation*}
with 
\begin{equation*}
\mathbb{E}\left| 1_{(Y(t)\leq Z(t))}-1_{(Y(s)\leq Z(t))}\right| ^{2}\leq
\left| G_{t}(Z(t)))-G_{t,s}(Z(t),Z(t)))\right|
\end{equation*}
\begin{equation*}
+\left| G_{s}(Z(t)))-G_{s}(Z(s)\wedge Z(t)\right| \leq K\left| s-t\right|
^{1+r}
\end{equation*}
and 
\begin{equation*}
\mathbb{E}\left| 1_{(Y(s)\leq Z(t))}-1_{(Y(s)\leq Z(s))}\right|
^{2}=G_{t}(Z(s)+G_{s}(Z(t))-2G_{s}(Z(t)\wedge Z(s))\leq K\left| s-t\right|
^{1+r}
\end{equation*}
and then (\ref{f03}) holds.

By putting together (\ref{f01}), (\ref{f02}) and (\ref{f03}) and by
repeatedly using the $c_{2}$-inequality, we arrive at (\ref{F1}).

Now we have to establish that 
\begin{equation}
\mathbb{E}\left| W_{2}(s)-W_{2}(t)\right| ^{2}\leq K\left| t-s\right| ^{1+r}.
\label{F02}
\end{equation}
Put 
\begin{equation*}
W_{2}(t)=(k+1)A(t)B(t)
\end{equation*}
with $A(t)=(1-G_{t}(Y(t))/G_{t}(Z(t))^{k}$, $%
B(t)=d((Z(t)-Y(t))/Z(t))1_{(Y(t)\leq Z(t))}$. We have by readily check that 
\begin{equation*}
\left| A(t)-A(s)\right| \leq 2\beta ^{-1}\left| Y(t)-Y(s)\right| +M_{0}\beta
^{-2}\left| Z(t)-Z(s)\right| +\beta ^{-2}\left|
G_{s}(Z(t))-G_{t}(Z(t))\right| .
\end{equation*}
Then by (H0)-(H3) and the $c_{2}-$inequality 
\begin{equation*}
\mathbb{E}\left| A(t)-A(s)\right| ^{2}\leq K\left| t-s\right| ^{1+r}.
\end{equation*}
\ Next 
\begin{equation*}
B(t)=d((Z(t)-Y(t))/Z(t))^{k}(1_{(Y(t)\leq Z(t)\wedge Z(s))}+1_{(Z(t)\wedge
Z(s)\leq Y(t)\leq Z(t))})
\end{equation*}
\begin{equation*}
=B_{1}(t,s)+B_{2}(t,s)
\end{equation*}
and 
\begin{equation*}
B(s)=B_{1}(s,t)+B_{2}(s,t)
\end{equation*}
with 
\begin{equation*}
B_{1}(t,s)=d((Z(t)-Y(t))/Z(t))^{k}1_{(Y(t)\leq Z(t)\wedge Z(s))}
\end{equation*}
and 
\begin{equation*}
B_{2}(t,s)=d((Z(t)-Y(t))/Z(t))^{k}1_{(Z(t)\wedge Z(s)\leq Y(t)\leq Z(t))}.
\end{equation*}
Then by (H2) 
\begin{equation*}
\mathbb{E}B_{2}(t,s)=G_{t}(t)-G_{t}(Z(t)\wedge Z(s))\leq K\left| t-s\right|
^{1+r},
\end{equation*}
and 
\begin{equation*}
\mathbb{E}B_{2}(s,t)\leq K\left| t-s\right| ^{1+r}.
\end{equation*}
Next, by putting $C(s,t)=1_{(Z(t)\wedge Z(s)\leq Y(t)\leq Z(t))},$%
\begin{equation*}
\left| B_{1}(t,s)-B_{1}(s,t)\right| =C(s,t)d^{\prime }(D(s,t))\left| \frac{%
Z(t)-Y(t)}{Z(t)}-\frac{Z(s)-Y(s)}{Z(s)}\right| ,
\end{equation*}
where $D(s,t)$ lies between ($Z(t)-Y(t))/Z(t)$ and ($Z(s)-Y(s))/Z(s).$ We
finally get 
\begin{equation*}
\left| B_{1}(t,s)-B_{1}(s,t)\right| \leq (Z_{1}^{-1}+Z_{2}Z_{1}^{-2})\left|
Z(t)-Z(s)\right| +Z_{1}^{-1}\left| Y(t)-Y(s)\right| .
\end{equation*}
By similar methods, we get 
\begin{equation*}
\mathbb{E}\left| B_{1}(t,s)-B_{1}(s,t)\right| ^{2}\leq K\left| t-s\right|
^{1+r}.
\end{equation*}
By combining all that precedes, we get (\ref{F02}), which together with (\ref{F1})
establishes by the $c_{2}-$inequality 
\begin{equation}
\mathbb{E}\left| W(s)-W(t)\right| ^{2}\leq K\left| t-s\right| ^{1+r}.
\label{F00}
\end{equation}

\bigskip

Now we have to prove that 
\begin{equation*}
\left| \mathbb{E}W(s)-\mathbb{E}W(t)\right| ^{2}\leq K\left| t-s\right|
^{1+r}.
\end{equation*}

We only sketch this second part. Let us consider $W_{i}(t)$, $i=1,2.$ We
have 
\begin{equation*}
\mathbb{E}W_{1}(t)=K(t)\int_{0}^{Z(t)}m_{t}(u)\text{ }du
\end{equation*}
and 
\begin{equation*}
\mathbb{E}W_{2}(t)=(k+1)\int_{0}^{Z(t)}(1-G_{t}(u)/G_{t}(Z(t)))^{k}\text{ }d(%
\frac{Z(t)-u}{Z(t)})\text{ }m_{t}(u)\text{ }du.
\end{equation*}
By (\ref{f01}),(\ref{f02}) and the decomposition of $\left| \gamma
_{t}(t)-\gamma _{s}(s)\right| $ used in (\ref{app04}), we have 
\begin{equation*}
\left| K(t)-K(s)\right| \leq K\left| s-t\right| ^{1+r/2}.
\end{equation*}
Furthermore 
\begin{equation*}
\int_{0}^{Z(t)}m_{t}(u)du-\int_{0}^{Z(s)}m_{s}(u)du=\int_{Z(t)\wedge
Z(s)}^{Z(t)}m_{t}(u)du-\int_{Z(t)\wedge Z(s)}^{Z(s)}m_{s}(u)du
\end{equation*}
\begin{equation*}
+\int_{0}^{Z(t)\wedge Z(s)}m_{t}(u)-m_{s}(u)du.
\end{equation*}
We then get 
\begin{equation*}
\left| \int_{0}^{Z(t)}m_{t}(u)du-\int_{0}^{Z(s)}m_{s}(u)du\right|
\end{equation*}
\begin{equation*}
\leq 2M_{0}\left| Z(t)-Z(s)\right| +Z_{2}K\left| s-t\right| ^{1+r/2}.
\end{equation*}
Then 
\begin{equation*}
\left| \mathbb{E}W_{1}(s)-\mathbb{E}W_{1}(t)\right| ^{2}\leq K\left|
s-t\right| ^{1+r}.
\end{equation*}
Now 
\begin{equation*}
\mathbb{E}W_{2}(t)=\int_{0}^{Z(t)}S(t,u)\text{ }du
\end{equation*}
with 
\begin{equation*}
S(t,u)=(k+1)(1-G_{t}(u)/G_{t}(Z(t)))^{k}\text{ }d(\frac{Z(t)-u}{Z(t)})\text{ 
}m_{t}(u).
\end{equation*}
Then 
$$
\mathbb{E}W_{2}(t)-\mathbb{E}W_{2}(t)=
$$ 
\begin{equation*}
\int_{0}^{Z(t)}S(t,u)du-\int_{0}^{Z(s)}S(s,u)du=\int_{Z(t)\wedge
Z(s)}^{Z(t)}S(t,u)-\int_{Z(t)\wedge Z(s)}^{Z(s)}S(s,u)du
\end{equation*}
\begin{equation*}
+ \int_{0}^{Z(t)\wedge Z(s)}S(t,u)-S(s,u)\text{ }du.
\end{equation*}
Since $S(t,u)$ is uniformly bounded, we have 
\begin{equation*}
0\leq \left| \int_{Z(t)\wedge Z(s)}^{Z(t)}S(t,u)\text{ }du-\int_{Z(t)\wedge
Z(s)}^{Z(s)}S(s,u)\text{ }du\right| \leq K\left| s-t\right| ^{1+r/2}.
\end{equation*}
Moreover, one easily shows by the (H0)-(H3), with similar techniques used
when handling $r_{k}(t)$, that 
\begin{equation*}
\left| S(t,u)-S(s,u)\right| \leq K\left| s-t\right| ^{1+r/2}.
\end{equation*}
Thus 
\begin{equation*}
\left| \mathbb{E}W_{2}(s)-\mathbb{E}W_{2}(t)\right| ^{2}\leq K\left|
s-t\right| ^{1+r}.
\end{equation*}

\end{document}